\newcommand\abs[1]{\lvert #1\rvert}
\newcommand\bw{\operatorname{bw}}
\newcommand\SUB{\operatorname{SUB}}
\newcommand\co[1]{\overline{#1}}
\title{Branch-width of connectivity functions is fixed-parameter tractable}
\author{Tuukka Korhonen\thanks{Supported by the European Union under Marie Skłodowska-Curie Actions (MSCA), project no. 101206430, and by the VILLUM Foundation, Grant Number 54451, Basic Algorithms Research Copenhagen (BARC).
}}
\affil{University of Copenhagen, Copenhagen, Denmark}
\author{Sang-il~Oum\thanks{Supported by the Institute for Basic Science (IBS-R029-C1)}}
\affil{Discrete Mathematics Group, Institute for Basic Science (IBS), Daejeon,~South~Korea}
\affil{Department of Mathematical Sciences, KAIST, Daejeon, South~Korea}
\affil[ ]{\small \textit{Email addresses:} \texttt{tuko@di.ku.dk},
\texttt{sangil@ibs.re.kr}}
\date{January 8, 2026; revised February 6, 2026}
\begin{document}
\maketitle

 \begin{textblock}{20}(-0.5, 8.3)
 \includegraphics[width=100px]{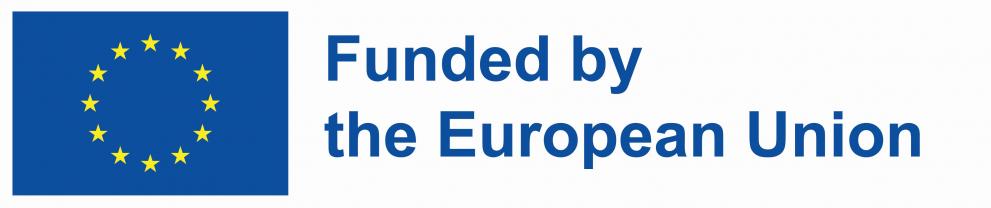}\end{textblock}

\begin{abstract}
    A connectivity function on a finite set $V$ is a symmetric submodular function $f \colon 2^V \to \mathbb{Z}$ with $f(\emptyset)=0$.
    We prove that finding a branch-decomposition of width at most $k$ for a connectivity function given by an oracle is fixed-parameter tractable (FPT), by providing an algorithm of running time $2^{O(k^2)} \gamma n^6 \log n$, where $\gamma$ is the time to compute $f(X)$ for any set $X$, and $n = |V|$.
    This improves the previous algorithm by Oum and Seymour [J. Combin. Theory Ser.~B, 2007], which runs in time $\gamma n^{O(k)}$.
    Our algorithm can be applied to rank-width of graphs, branch-width of matroids, branch-width of (hyper)graphs, and carving-width of 
    graphs.
    This resolves an open problem asked by Hlin\v{e}n\'y [SIAM J. Comput., 2005], who asked whether branch-width of matroids given by the rank oracle is fixed-parameter tractable.
    Furthermore, our algorithm improves the best known dependency on $k$ in the running times of FPT algorithms for graph branch-width, rank-width, and carving-width.
\end{abstract}

\section{Introduction}\label{sec:intro}
In 2005, Hlin\v{e}n\'y~\cite{Hlineny2002} asked the following questions for matroids.

\begin{quote}
    What is the parameterized complexity of the problem to determine the branch-width of a matroid $M$?
    \begin{enumerate}[label=\rm(\arabic*)]
        \item If $M=M(A)$ is given by a matrix representation over an infinite field?
        \item If $M$ is given by a rank oracle?
    \end{enumerate}
\end{quote}
We resolve this long-standing open problem completely, in a more general setting, by showing that the branch-width of connectivity functions is fixed-parameter tractable (FPT).
A \emph{connectivity function} is an integer-valued function $f$ defined on all subsets of a finite set $V$ such that 
\begin{enumerate}[label=\rm (\roman*)]
    \item (symmetric) $f(X)=f(V-X)$ for all sets $X$,
    \item (submodular) $f(X)+f(Y)\ge f(X\cap Y)+f(X\cup Y)$ for all sets~$X$ and $Y$, and 
    \item $f(\emptyset)=0$.
\end{enumerate}

Branch-width of graphs and matroids was introduced by Robertson and Seymour in their Graph Minors series~\cite{RS1991}.
It was later generalized for connectivity functions by Oum and Seymour~\cite{OS2004,OS2005}.
Let us provide the general definition of branch-width of connectivity functions.

A tree is \emph{subcubic} if every vertex has degree $3$ or $1$.
Let $f:2^V\to\mathbb Z$ be a connectivity function on a set $V$.
A \emph{branch-decomposition} of~$f$ is a pair $(T,L)$ of a subcubic tree $T$ and a bijection $L$ from the set of leaves of $T$ to $V$.
For a branch-decomposition $(T,L)$ of~$f$ and an edge~$e$ of~$T$, 
we define the \emph{width} of~$e$ as $f(L^{-1}(A_e))$, 
where $(A_e,B_e)$ is the partition of $V(T)$ induced by the components of $T-e$.
The \emph{width} of a branch-decomposition $(T,L)$ is the maximum width of all edges of $T$.
The \emph{branch-width} of~$f$, denoted by $\bw(f)$, is the minimum width of all branch-decompositions of~$f$.
If $\abs{V}\le 1$, there is no branch-decomposition and yet we define the branch-width of $f$ to be $0$.

Hicks and Oum~\cite{HO2011} wrote the following remark in 2011.
\begin{quote}
    For many applications on fixed-parameter tractable algorithms, it is desirable to have an algorithm which runs in time $O(g(k)n^c)$ for some function $g$ and a constant $c$ independent of $k$. Such
an algorithm is called a \emph{fixed-parameter tractable} algorithm with parameter $k$. It is still unknown
whether there is a fixed-parameter tractable algorithm to decide whether branch-width of $f$ is at
most $k$ when $f$ is an integer-valued symmetric submodular function given as an oracle.
\end{quote}

Here is our main theorem, proving that there is a fixed-parameter tractable algorithm to find a branch-decomposition of width at most $k$ if one exists for general connectivity functions.
\begin{theorem}[label=thm:main-simplified]
    Let $n>1$ be an integer and $f:2^V\to \mathbb Z$ a connectivity function on an $n$-element set~$V$.
    Let $\gamma$ be the time to compute $f(X)$ for any subset $X$ of~$V$.
In time
    $2^{O(k)}\gamma  n^6 \log n + 2^{O(k^2)} \gamma n$,
we can either find a branch-decomposition of~$f$ of width at most~$k$, 
    or confirm that the branch-width of~$f$ is larger than~$k$.
\end{theorem}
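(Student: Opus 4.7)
The plan is a two-phase algorithm, matching the two additive terms in the claimed running time. In the first phase, running in time $2^{O(k)}\gamma n^6\log n$, the goal is to either produce a branch-decomposition of $f$ of width $O(k)$ or certify that $\bw(f)>k$. In the second phase, given such an approximate decomposition, a bottom-up dynamic program in time $2^{O(k^2)}\gamma n$ either refines it to width at most~$k$ or again reports $\bw(f)>k$.

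For the approximation phase, the strategy is a top-down recursive construction. At each step we maintain a ``piece'' $X\subseteq V$ together with an interface of size $O(k)$, and we seek a separation of $X$ of $f$-value $O(k)$ that is roughly balanced with respect to the interface. Whenever $\bw(f)\le k$, submodularity guarantees that such a separation exists as soon as $X$ is large. Finding it reduces to minimizing $f$ under cardinality constraints, implementable with $\gamma n^{O(1)}$ oracle calls; the $2^{O(k)}$ factor comes from enumerating how the desired separation interacts with the current interface. Recursing over $O(\log n)$ levels with $n^{O(1)}$ separation computations per level accounts for the $n^6\log n$ factor. If, at some node of the recursion, no balanced separation of value $O(k)$ is found, the obstruction assembled from the failed minimizations should form a tangle of order $k+1$, which by the Oum--Seymour duality~\cite{OS2005} certifies $\bw(f)>k$.

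For the exact phase, we root the approximate decomposition $(T_0,L_0)$ and run a bottom-up dynamic program in the style of Bodlaender--Kloks. At each node $v$ of $T_0$, the subtree below $v$ covers a set $V_v\subseteq V$ whose boundary under $f$ has size $O(k)$, and we store a family of \emph{characteristics}: one partial branch-decomposition of $V_v$ for each realizable ``interface pattern'' at the boundary. Since the boundary has $O(k)$ elements and each relevant entry takes $O(k)$ values, the number of patterns is $2^{O(k^2)}$; merging children therefore costs $2^{O(k^2)}\gamma$, and the $O(n)$ nodes of $T_0$ yield the $2^{O(k^2)}\gamma n$ term.

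The principal obstacle will be Phase~1: moving from the $n^{O(k)}$-time search of Oum--Seymour to $2^{O(k)}\gamma n^{O(1)}$ requires locating balanced low-order separations \emph{without} enumerating all $n^{O(k)}$ separations of value at most~$k$. The key tool to develop is a dual-side construction that, given a current candidate separator of size $O(k)$, uses a single submodular minimization to either extend it to a larger balanced separator or to expose a local obstruction that contributes to a tangle witnessing $\bw(f)>k$; the branching depth must then be bounded by a function of~$k$ alone. Designing this extension routine, and verifying that the obstructions it produces genuinely assemble into a tangle (so that the algorithm terminates correctly when it cannot recurse), is where the bulk of the technical work lies. Once this approximation primitive is in place, Phase~2 follows from a now-standard branch-decomposition dynamic programming framework.
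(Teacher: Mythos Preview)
Your two-phase plan misidentifies where the difficulty lies. Phase~1---obtaining a branch-decomposition of width $O(k)$ in time $2^{O(k)}\gamma n^{O(1)}$ or certifying $\bw(f)>k$---is already a theorem of Oum and Seymour~\cite{OS2004} (stated here as \zcref{thm:approx}); there is nothing to develop. The genuine gap is Phase~2. A Bodlaender--Kloks dynamic program needs, at each node~$v$ with subtree set $V_v$, a bounded-size combinatorial \emph{interface} so that only $2^{O(k^2)}$ characteristics suffice. For graph branch-width this is the middle vertex set, for rank-width a rank-$k$ matrix over $\mathrm{GF}(2)$, and so on. But a general connectivity function given only by an oracle has no such object: $f(V_v)\le O(k)$ is merely a number, not ``$O(k)$ elements'', and $f(X\cup Y)$ for $X\subseteq V_v$, $Y\subseteq\co{V_v}$ is not determined by any bounded summary of $X$ relative to the boundary. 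Partial decompositions of $V_v$ therefore cannot be compressed to boundedly many characteristics, and your merge step has no table to operate on. This is precisely why the problem stayed open despite Bodlaender--Kloks being standard technology for the specific cases.

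The paper sidesteps dynamic programming entirely. Its key insight is that if both $A$ and $\co{A}$ are \emph{titanic}---every tripartition of $A$ has a part with $f$-value at least $f(A)$---then $(A,\co{A})$ is a safe cut: some optimum branch-decomposition displays it (\zcref{prop:branch}), so $\bw(f)=\max(\bw(f\lhd A),\bw(f\lhd\co{A}))$ and the two subproblems are completely independent, with \emph{no interface to track}. Starting from the $O(k)$-width approximate decomposition, \zcref{prop:split} shows that whenever $n>3^{O(k)}$ one can find such a titanic bipartition, by repeatedly replacing a non-titanic side with a large part of a witnessing tripartition (found via the polymatroid-covering routine of \zcref{prop:coverfpt,prop:titanic}). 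Recursing along safe cuts reduces every instance to size $3^{O(k)}$, and on those the Oum--Seymour XP algorithm~\cite{OS2005} runs in time $\gamma\,(3^{O(k)})^{O(k)}=2^{O(k^2)}\gamma$. That, not a DP over characteristics, is the source of the $2^{O(k^2)}\gamma n$ term.
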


The branch-width of a matroid $M$ on a ground set $E(M)$ is defined as the branch-width of the connectivity function $\lambda(X) = r(X) + r(E(M)-X) - r(E(M))$ of the matroid, where $r(X)$ denotes the rank of a set $X \subseteq E(M)$.
Therefore, \zcref{thm:main-simplified} answers both of the questions of Hlin\v{e}n\'y affirmatively.
Furthermore, we can improve its running time by using algorithms dedicated for matroids and obtain the following.

\begin{theorem}[label=thm:matroidmain,store=matroidmain]
    There is an algorithm that, with input an $n$-element matroid $M$, given by its rank oracle, and an integer $k$, finds a branch-decomposition of width at most $k$, if one exists, in time $2^{O(k)} \gamma n^{2.5}\log^2 n +  2^{O(k^2)}\gamma n$, where $\gamma$ is the time to compute the rank of any set.     
\end{theorem}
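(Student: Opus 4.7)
The plan is to follow the same high-level algorithm as in \zcref{thm:main-simplified} and replace only the polynomial-time subroutines that query the connectivity function by faster matroid-specific counterparts. The combinatorial part responsible for the $2^{O(k^2)}\gamma n$ term works verbatim for any connectivity function, so the entire improvement will be concentrated in the polynomial-in-$n$ factor, which should drop from $n^6\log n$ to $n^{2.5}\log^2 n$.

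The key observation is that the subroutines on a connectivity function $f$ invoked by the algorithm of \zcref{thm:main-simplified} reduce to computing, or testing for, a minimum $f$-separation of order at most $k$, possibly subject to a few boundary constraints such as placing prescribed subsets on opposite sides. For a general $f$ this forces submodular function minimization, which is the source of the $n^6\log n$ factor. When $f=\lambda_M$ is the matroid connectivity function, however, the corresponding minimum-separation problem admits the classical Cunningham--Edmonds reduction to matroid union, and hence to matroid intersection on $M$ and its dual. I would therefore enumerate the polynomial-time subroutines used inside the proof of \zcref{thm:main-simplified}, design a matroid-intersection-based replacement for each, and plug in a recent matroid intersection algorithm that uses $\tilde{O}(n^{2.5})$ rank queries; combined with an outer binary search or divide-and-conquer step contributing an extra $\log n$ factor, this delivers the claimed polynomial factor $n^{2.5}\log^2 n$, and the $2^{O(k)}$ factor is carried over directly from the general algorithm.

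The main obstacle I anticipate is the handling of constrained minimum-separation queries involving up to $O(k)$ distinguished elements that must lie on prescribed sides. Such constraints can either be absorbed directly into the matroid union formulation by contracting or deleting the fixed elements, or else handled by enumerating $2^{O(k)}$ sub-cases; both options are compatible with the target running time and preserve the $2^{O(k)}$ factor in the first term. A secondary technical point is to verify that the matroid intersection subroutines still attain $\tilde{O}(n^{2.5})$ under the contractions and deletions these constraints impose, which is standard but needs care. If any subroutine in \zcref{thm:main-simplified} resists a direct reduction to matroid intersection, an intermediate fallback is to exploit the fact that $\lambda_M$ takes integer values bounded by $k+1$ along the search, so specialized shortest-augmenting-path-type algorithms on $M \oplus M^*$ remain available.
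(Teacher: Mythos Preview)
Your high-level plan is exactly the paper's: start from the Oum--Seymour matroid approximation (\zcref{cor:matroid}) to obtain a branch-decomposition of width at most $3k+1$, then run \zcref{prop:approx-exact} with every submodular-minimisation call replaced by a matroid-intersection computation of a constrained $\lambda_M$-minimum (\zcref{prop:matroidmin}).  Where your proposal falls short is the running-time bookkeeping.

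The algorithm does \emph{not} make only $\tilde{O}(1)$ min-separation queries; inside \zcref{prop:coverfpt} each of the $2^{O(k)}$ recursive nodes spends $O(n)$ polymatroid evaluations to extend the current triple to flats, and each such evaluation is itself one constrained min-separation call.  Together with the $2^{O(k)}\log n$ levels of the safe-cut recursion in \zcref{prop:approx-exact}, the total number of matroid-intersection calls is $2^{O(k)} n\log n$.  Hence you need an intersection routine costing $O(\gamma n^{1.5}\log n)$ per call (as in \zcref{thm:matroidint}, which gives $O(n\sqrt{r}\log n)$ rank queries), not $\tilde{O}(n^{2.5})$; with the latter you would end up at $2^{O(k)}\gamma n^{3.5+o(1)}$ rather than the claimed $2^{O(k)}\gamma n^{2.5}\log^2 n$.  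The product $2^{O(k)} n\log n \cdot \gamma n^{1.5}\log n$ is what produces the first term.

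Finally, the anticipated obstacle about ``$O(k)$ distinguished elements'' does not arise.  The min-separation queries are of the form ``minimise $\lambda_M(Z)$ subject to $X\subseteq Z\subseteq E(M)-Y$'' for \emph{arbitrary} sets $X,Y$, and a single contraction/deletion ($M_1=M/X\setminus Y$, $M_2=M\setminus X/Y$) reduces this to unconstrained matroid intersection via Edmonds' min--max formula; no $2^{O(k)}$ enumeration of sub-cases is needed.
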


The previous best algorithm for branch-width of a general connectivity function was given by Oum and Seymour~\cite{OS2005} in 2007.
They showed that the problem of computing branch-width of a connectivity function given by an oracle is slice-wise polynomial (XP) by giving an algorithm with running time $\gamma n^{O(k)}$.
Slightly earlier, Oum and Seymour~\cite{OS2004} gave an FPT approximation algorithm for branch-width of connectivity functions.
Their algorithm runs in time $2^{O(k)} \gamma n^{O(1)} $, and either determines that the branch-width is more than $k$, or returns a branch-decomposition of width at most $3k+c$, where $c \le k$ is the maximum of $f(\{v\})$ for $v \in V$.
Oum~\cite{Oum2009} also presented an exponential-time algorithm to compute branch-width.

For branch-width of matroids represented over a fixed finite field, Hlin\v{e}n\'y and Oum~\cite{HO2006} gave an FPT algorithm in~2008 (see also~\cite{Hlineny2002,JKO2019}).
For more general settings of matroid branch-width, the problem of finding an FPT algorithm remained open before this work.

In addition to branch-width of matroids, the branch-width of connectivity functions captures several well-studied graph width parameters, including rank-width, carving-width, and branch-width of graphs.
Each of these was known to be fixed-parameter tractable: Bodlaender and Thilikos~\cite{BT1997} gave a linear FPT algorithm for branch-width of graphs, Thilikos, Serna, and Bodlaender~\cite{TSB2000} gave a linear FPT algorithm for carving-width, and Courcelle and Oum~\cite{DBLP:journals/jct/CourcelleO07} gave an FPT algorithm for rank-width.
The FPT algorithm for rank-width was later improved to almost-linear by Korhonen and Soko{\l}owski~\cite{KS2024} (see also~\cite{HO2006,Oum2006,JKO2019,FK2024} for other works on computing rank-width).

Despite the earlier FPT algorithms for the aforementioned special cases of branch-width of connectivity functions, we believe that the algorithm of \zcref{thm:main-simplified} is interesting also in their context.
Its running time dependency on the parameter $k$ is only $2^{O(k^2)}$, while for the algorithms for graph branch-width and carving-width~\cite{BT1997,TSB2000} it is at least $2^{\Omega(k^3)}$, and for the algorithms for rank-width and finite field matroid branch-width~\cite{DBLP:journals/jct/CourcelleO07,HO2006,JKO2019,FK2024,KS2024} it is at least doubly exponential.
Furthermore, in our opinion, the algorithm of \zcref{thm:main-simplified} provides so far the simplest proof of fixed-parameter tractability for all of these width parameters.
It is an interesting direction of future research to give faster problem-specific implementations of the technique of \zcref{thm:main-simplified} for these special cases.

\paragraph{Outline of the algorithm.}
The proof of \zcref{thm:main-simplified} is not long, but let us still shortly outline it.
We say that a \emph{cut} of a connectivity function $f \colon 2^{V} \to \mathbb{Z}$ is a bipartition $(A,B)$ of $V$.
A \emph{safe cut} is a cut $(A,B)$ such that there exists an optimum-width branch-decomposition where $(A,B)$ corresponds to an edge of the decomposition.

The main idea of the algorithm of \zcref{thm:main-simplified} is to repeatedly find safe cuts and greedily decompose $f$ along them.
In particular, if we find a safe cut $(A,B)$ with $|A|,|B| \ge 2$, we can construct two smaller connectivity functions $f_A$ and $f_B$, recursively find optimum-width branch-decompositions of them, and combine them to an optimum-width branch-decomposition of $f$.

The surprising fact that makes the above idea work is that we can always find safe cuts, provided that the branch-width of $f$ is small enough compared to $|V|$.
In particular, we show that if we are given a branch-decomposition of $f$ of width $\ell$, where $3^{\ell+1} < |V|$, then we can find a safe cut $(A,B)$ with $|A|,|B| \ge 2$ in time $2^{O(\ell)}\gamma n^{O(1)} $.
Therefore, by using either iterative compression or the FPT approximation algorithm of Oum and Seymour~\cite{OS2004}, we can decompose along safe cuts until $|V| \le 2^{O(k)}$.
Once $|V| \le 2^{O(k)}$, we can use the XP algorithm of Oum and Seymour~\cite{OS2005} to find an optimum-width branch-decomposition in $2^{O(k^2)} \gamma$ time.

\paragraph{Organization of the paper.}
The paper is organized as follows.
In \zcref{sec:prelim} we review the necessary definitions and preliminary results.
In \zcref{sec:search} we provide an algorithm for testing if a cut satisfies a certain condition that implies it is safe. 
In \zcref{sec:titanic} we provide an algorithm for turning cuts into safe cuts.
In \zcref{sec:algorithm} we give our algorithm, and in \zcref{sec:matroid} the optimized version for matroids.
We conclude with discussions and open questions in \zcref{sec:conclusion}.

\section{Preliminaries}\label{sec:prelim}
We start this section by reviewing several properties of submodular functions and connectivity functions.
A function $f:2^V\to\mathbb Z$ is \emph{submodular} if $f(X)+f(Y)\ge f(X\cap Y)+f(X\cup Y)$ for all $X,Y\subseteq V$.
For $X \subseteq V$, we use the notation $\co{X} = V-X$.
The function $f$ is \emph{symmetric} if $f(X) = f(\co{X})$ for all $X \subseteq V$, and a \emph{connectivity function} if it is submodular, symmetric, and $f(\emptyset) = 0$.

We remark that the condition $f(\emptyset) = 0$ is natural, because for every symmetric and submodular function $f$, the function $f'(X) = f(X) - f(\emptyset)$ is a connectivity function with non-negative image.

\begin{theorem}[label=thm:os,note={Oum and Seymour~\cite{OS2005}}]
   Let $n$, $k$ be positive integers.
   Let $f:2^V\to \mathbb Z$ be a connectivity function on an $n$-element set~$V$.
   Let $\gamma$ be the time to compute $f(X)$ for any subset $X$ of~$V$.
   There is an algorithm to 
   find a branch-decomposition of~$f$ of width at most $k$ if one exists, 
   in time $O(\gamma n^{8k+9}\log n)$.
\end{theorem}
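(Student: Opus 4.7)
The plan is bottom-up dynamic programming on candidate \emph{displayed subsets} of $V$. For every edge $e$ of a branch-decomposition of width at most $k$, the associated subset $L^{-1}(A_e)$ has $f$-value at most $k$, so all displayed sets lie in $\mathcal F_k = \{X \subseteq V : f(X) \le k\}$. For each $X \in \mathcal F_k$, define $\mathrm{good}(X)$ to hold when $X$ admits a subcubic tree whose leaves biject with $X$ and every internal edge induces a bipartition $(Y, X \setminus Y)$ with $f(Y) \le k$. The recurrence is that $\mathrm{good}(X)$ holds for $|X| \ge 2$ if and only if there exist nonempty disjoint $Y, Z \in \mathcal F_k$ with $Y \cup Z = X$ and both $\mathrm{good}(Y)$ and $\mathrm{good}(Z)$; singletons are good by convention. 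Then $\bw(f) \le k$ holds iff some three-way partition $V = X_1 \sqcup X_2 \sqcup X_3$ satisfies $X_i \in \mathcal F_k$ and $\mathrm{good}(X_i)$ for each $i$, corresponding to removing the three edges at an internal node of an optimal tree.

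The main obstacle is that $|\mathcal F_k|$ can be exponential in $n$ even for connectivity functions arising from graphs, so naive enumeration is infeasible. The crucial structural input is a polynomial-size \emph{canonical} subfamily $\mathcal C \subseteq \mathcal F_k$ of size $n^{O(k)}$, enumerable in time $\gamma n^{O(k)}$, with the property that whenever $\bw(f) \le k$ some optimal decomposition uses only displayed sets from $\mathcal C$. I would construct $\mathcal C$ from minimum-submodular-cut structures: for each ordered pair $(a,b) \in V \times V$ and each $\ell \le k$, include canonical $(a,b)$-separators of $f$-value at most $\ell$ selected by a fixed tie-breaking rule, for instance the unique minimum-side minimum $(a,b)$-separator of value $\ell$ obtained by iterated uncrossing. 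Completeness---that any displayed set appearing in an optimal decomposition can be swapped for one in $\mathcal C$ without increasing the width or destroying the subcubic structure---would follow from a submodular uncrossing argument similar in spirit to the one used in Oum and Seymour's earlier FPT approximation algorithm.

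Given $\mathcal C$, the algorithm enumerates it, sorts by $|X|$, and for each $X \in \mathcal C$ iterates over all disjoint pairs $(Y, Z) \in \mathcal C \times \mathcal C$ with $Y \cup Z = X$, checking the recurrence in time $\gamma + O(\log n)$ per pair via a balanced membership structure for $\mathcal C$. With $|\mathcal C|^2 = n^{O(k)}$ pairs per state and $|\mathcal C| = n^{O(k)}$ states, the total cost is $O(\gamma n^{O(k)} \log n)$, and tight bookkeeping of the exponent of the canonical family's size should yield the stated $n^{8k+9}$. A standard traceback through the DP table then reconstructs the branch-decomposition itself. The hard part of this plan is clearly the existence and explicit construction of the canonical family $\mathcal C$ together with the uncrossing lemma that justifies completeness; everything else is routine dynamic programming.
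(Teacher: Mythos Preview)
This theorem is not proved in the present paper at all: it appears in the preliminaries as a black-box citation to Oum and Seymour~\cite{OS2005}, and is only \emph{used} later (inside \zcref{prop:approx-exact}, when $n\le 3^{\ell+1}$). So there is nothing here to compare against except the original 2007 paper.

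Judged on its own, your outline has the right high-level shape---bottom-up DP over a polynomial family of candidate displayed sets---but the only nontrivial step is precisely the one you leave as a wish. Your proposed $\mathcal C$ consists of, for each ordered pair $(a,b)$ and each $\ell\le k$, ``the unique minimum-side minimum $(a,b)$-separator of value~$\ell$''. That is $O(kn^2)$ sets, not $n^{\Theta(k)}$; it is far too small a family to carry an uncrossing argument of the kind you allude to, and you give no statement (let alone proof) of a lemma saying every displayed set of some optimal decomposition can be exchanged for one of these without raising the width. The phrase ``similar in spirit to the one used in Oum and Seymour's earlier FPT approximation algorithm'' does not do the work: that earlier argument produces a $3k$-approximate decomposition, not an exact one, and the exact algorithm of~\cite{OS2005} rests on a genuinely different structural fact.

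For the record, the actual mechanism in~\cite{OS2005} is not minimum-$(a,b)$-cuts at all. It is (in the language of the present paper) that one may assume every displayed set is \emph{titanic}, and that a titanic set $A$ with $f(A)\le k$ is determined by a bounded-size subset of~$V$ via a closure operation computable with submodular minimization; this is what yields the $n^{O(k)}$ bound on the family and the honest $8k+9$ exponent. Your DP, once fed such a family, would be fine; but as written the proposal contains no proof of the existence of~$\mathcal C$, and the specific $\mathcal C$ you suggest does not work.
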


We say that a function $g:2^V\to\mathbb Z$ is \emph{nondecreasing} if for all subsets $X\subseteq Y\subseteq V$, we have $g(X)\le g(Y)$.
A \emph{bipartition} of a set $V$ is a pair $(A,B)$ of disjoint sets with $A \cup B = V$, and a \emph{tripartition} of $V$ is a triple $(A,B,C)$ of disjoint sets with $A \cup B \cup C = V$.
We allow the sets $A,B,C$ to be empty.

We recall a property of connectivity functions that is sometimes called the \emph{posimodularity}.

\begin{lemma}[label=lem:diff]
    Let $f:2^V\to \mathbb Z$ be a connectivity function on a finite set~$V$.
    Then for subsets $X$ and $Y$ of $V$, 
    \[ f(X)+f(Y)\ge f(X-Y)+f(Y-X).    \] 
\end{lemma}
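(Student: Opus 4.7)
The plan is to derive the inequality by applying submodularity to a cleverly chosen pair of sets and then using symmetry to rewrite the right-hand side. The natural pair to try is $X$ and $\co{Y} = V - Y$, because the intersection $X \cap \co{Y}$ is exactly $X - Y$, which is one of the terms we want on the right side. So the first step I would carry out is to write down the submodular inequality
\[
f(X) + f(\co{Y}) \ge f(X \cap \co{Y}) + f(X \cup \co{Y}).
\]

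Next I would simplify the two terms on the right using elementary set identities. The intersection is $X \cap \co{Y} = X - Y$, which already matches one desired term. The union satisfies $X \cup \co{Y} = \co{Y - X}$, since a point lies in $X \cup \co{Y}$ precisely when it is not in $Y - X$. Substituting these gives
\[
f(X) + f(\co{Y}) \ge f(X - Y) + f(\co{Y - X}).
\]

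Finally I would invoke symmetry twice: $f(\co{Y}) = f(Y)$ on the left, and $f(\co{Y - X}) = f(Y - X)$ on the right. Both follow directly from the definition of a connectivity function. Combining yields exactly
\[
f(X) + f(Y) \ge f(X - Y) + f(Y - X),
\]
as required.

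There is essentially no obstacle here; the entire argument is a one-line consequence of submodularity applied to the pair $(X, \co{Y})$ together with the symmetry axiom. The only care needed is in verifying the set-theoretic identity $X \cup \co{Y} = \co{Y - X}$, which is routine. This lemma is in fact the standard derivation of posimodularity from submodularity plus symmetry, and I would present it in that order.
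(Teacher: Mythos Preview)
Your proof is correct and follows exactly the paper's approach: apply submodularity to the pair $(X,\co{Y})$, identify $X\cap\co{Y}=X-Y$ and $X\cup\co{Y}=\co{Y-X}$, and use symmetry twice. The paper compresses this into a single chain of equalities and inequalities, but the content is identical.
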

\begin{proof}
    $f(X)+f(Y)=f(X)+f(\co{Y})\ge f(X-Y)+f(X\cup \co{Y})= f(X-Y)+f(Y-X)$.
\end{proof}

As part of our algorithm we use the following submodular function minimization algorithm.

\begin{theorem}[note={Chakrabarty, Lee, Sidford, and Wong~\cite{CLSW2017}},label={thm:submodularmin}]
    Let $f:2^V\to\mathbb Z$ be a submodular function on an $n$-element set~$V$ such that $\abs{f(X)}\le M$ for all $X\subseteq V$.
    Let us assume that $\gamma$ is the time to compute $f(X)$ for any subset $X$ of~$V$.
    Then we can find $A \subseteq V$ minimizing $f(A)$ in time $O(\gamma n M^3 \log n)$. 
\end{theorem}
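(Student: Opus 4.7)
The plan is to reduce the minimization of $f$ to convex optimization on the cube $[0,1]^V$ via the Lov\'asz extension, and then invoke a convex-optimization routine whose analysis is tuned to the magnitude bound~$M$.

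First, I would introduce the Lov\'asz extension $\hat f\colon[0,1]^V\to\mathbb R$. For $x\in[0,1]^V$, sort the coordinates $x_{\sigma(1)}\ge\cdots\ge x_{\sigma(n)}$, set $S_i=\{\sigma(1),\dots,\sigma(i)\}$ and $x_{\sigma(n+1)}:=0$, and define $\hat f(x)=\sum_{i=1}^n(x_{\sigma(i)}-x_{\sigma(i+1)})\,f(S_i)$. Two classical facts drive the reduction: submodularity of $f$ is equivalent to convexity of $\hat f$, and $\min_{x\in[0,1]^V}\hat f(x)=\min_{S\subseteq V}f(S)$, attained at some $0$/$1$ vertex. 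A subgradient of $\hat f$ at $x$ is read off from the same sorted order using $n$ oracle calls to $f$, at total cost $O(\gamma n+n\log n)$ per query.

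Next, I would feed $\hat f$ to a cutting-plane method using only value and subgradient access. Since $\hat f$ takes integer values at the vertices of the cube and has total variation at most $O(M)$, once an iterate achieves additive error below $1/(2n)$ the induced family of near-tight sets can be uncrossed in Fujishige style to recover an exact minimizing set~$A$. Generic cutting-plane analyses bring the error below $1/(2n)$ in $\mathrm{poly}(n,\log M)$ oracle calls, yielding a running time of roughly $\widetilde O(\gamma n^2)$, with no direct dependence on~$M$.

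The main obstacle is matching the specific $O(\gamma n M^3\log n)$ bound, which saves a factor of~$n$ at the price of a polynomial dependence on~$M$. Achieving it requires the carefully tuned scaling-based algorithm of Chakrabarty, Lee, Sidford, and Wong~\cite{CLSW2017}, which exploits integrality of $f$ and the bound $\abs{f(X)}\le M$ to keep the per-phase oracle count nearly linear in~$n$ while moving from coarse to fine resolution in a logarithmic number of outer rounds. Re-deriving their analysis is well beyond what a short sketch could cover, so for our purposes I would invoke the theorem as a black box.
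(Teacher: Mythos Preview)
The paper does not prove this theorem at all; it is quoted as a black-box result from Chakrabarty, Lee, Sidford, and Wong~\cite{CLSW2017} and used only through the notation $\SUB(n,M,\gamma)$. Your proposal ultimately arrives at the same place---after sketching the Lov\'asz-extension reduction you concede that matching the $O(\gamma n M^3 \log n)$ bound requires invoking~\cite{CLSW2017} as a black box---so there is no discrepancy with the paper's treatment.
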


Let us write $\SUB(n,M,\gamma)$ to denote the running time to find a set $A$ that minimizes $f(A)$ for a submodular function~$f$ on an $n$-element set with $\abs{f(X)}\le M$ for all sets~$X$, 
where computing $f(X)$ takes the time $\gamma$ for any set $X$.
By \zcref{thm:submodularmin}, $\SUB(n,M,\gamma) \le O(\gamma n M^3 \log n)$.

For a proper nonempty subset~$A$ of~$V$, we define $f\lhd A$ to be the connectivity function on $V\lhd A = \co{A} \cup \{A\}$,
where  for $X\subseteq V\lhd A$, we have 
\[ 
    (f\lhd A)(X)= 
    \begin{cases}
        f(X) & \text{if } A\notin X,\\ 
        f((X-\{A\})\cup A) &\text{if }A\in X.
    \end{cases}
\] 
In other words, in $V\lhd A$, the elements of $A$ are merged into one element $A$
and $f\lhd A$ is a connectivity function obtained from $f$ by merging elements of $A$ into one.

It is easy to see that if both $f\lhd A$ and $f\lhd \co{A}$ have branch-width at most $k$, then so does $f$.
In particular, branch-decompositions $(T_A,L_A)$ of $f \lhd A$ and $(T_{\co{A}}, L_{\co{A}})$ of $f \lhd \co{A}$ can be merged into a branch-decomposition of $f$ by combining them from the leaves corresponding to $A$ and $\co{A}$, and the width of the resulting decomposition is the maximum of the widths of $(T_A,L_A)$ and $(T_{\co{A}}, L_{\co{A}})$.

The key ingredient of this paper is a sufficient condition for the converse, which uses a property defined by Robertson and Seymour~\cite{RS1991}, called \emph{titanic}.
A proper nonempty subset $A$ of~$V$ is \emph{titanic} if for every tripartition $(A_1,A_2,A_3)$ of $A$, there exists $i\in\{1,2,3\}$ such that $f(A_i)\ge f(A)$.
Note that we allow $A_i$ to be empty.

The following lemma is from Geelen, Gerards, Robertson, and Whittle~\cite[Lemma 2.1]{GGRW2003a}. They wrote it for the matroid connectivity functions but their proof works completely for general connectivity functions. 
We strengthen their lemma, by allowing $\co{A}$ to be titanic instead of requiring $f(A)\le k$.
In particular, the sufficient condition is that both $A$ and $\co{A}$ are titanic.
A similar lemma was also recently shown in the work of Korhonen~\cite[Lemmas~6.22~and~6.27]{DBLP:conf/stoc/Korhonen25}.

\begin{proposition}[label=prop:branch]
    Let $f:2^V\to\mathbb Z$ be a connectivity function on a finite set $V$. 
    Let $(T,L)$ be a branch-decomposition of~$f$ of width at most $k$.
    Let $A$ be a titanic set.
    If $\co{A}$ is titanic or $f(A)\le k$, then we can convert $(T,L)$ into a branch-decomposition of $f\lhd A$ of width at most $k$ in time $O(\gamma \abs{V})$, where $\gamma$ is the time to compute $f(X)$ for any set $X$.
\end{proposition}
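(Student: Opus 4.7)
My plan is to convert $(T,L)$ into a branch-decomposition of $f\lhd A$ of width at most $k$ by iteratively modifying $T$ through local tree surgeries, maintaining at each step that the tree is a branch-decomposition of $f$ of width $\le k$, until some edge $e^\star$ of the modified tree $T'$ displays the bipartition $(A,\co{A})$ of $V$. Once this is achieved, contracting the $A$-side of $T'-e^\star$ into a single leaf labeled~$A$ yields the desired branch-decomposition of $f\lhd A$: each edge of $T'$ outside the contracted subtree contributes a cut that remains $\le k$ in $f\lhd A$, and the new leaf-edge for~$A$ has cut value $f(A)$, which matches the width of~$e^\star$ and is hence $\le k$.

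Let $T_A\subseteq T$ be the minimal subtree spanning $L^{-1}(A)$; its leaves are exactly $L^{-1}(A)$, and the connected components of $T\setminus V(T_A)$ --- call them \emph{danglings} --- are each attached at a unique $T_A$-internal vertex of $T_A$-degree $2$, with leaves in $\co{A}$. If only one dangling is present, its attaching edge is already the desired $e^\star$. Otherwise I pick two danglings $D_1,D_2$ with leaf sets $B_1,B_2\subseteq\co{A}$ attached at distinct $T_A$-vertices $v_1,v_2$ and perform a \emph{merge surgery}: detach $D_1$ from $v_1$ (suppressing the now degree-$2$ vertex~$v_1$), subdivide the edge from $v_2$ to $D_2$'s root by a new vertex~$w$, and attach $D_1$'s root to~$w$. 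The resulting tree has one fewer dangling --- the merged one has leaf set $B_1\cup B_2$ --- so after at most $|V|$ merge surgeries we reach the single-dangling configuration.

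The main obstacle is showing each merge preserves width $\le k$. The cut values that change are (a) those on the $v_1$-to-$v_2$ path in $T_A$, each of the form $f(R)\le k$ becoming $f(R\cup B_1)$; and (b) the new edge $v_2w$, with cut value $f(B_1\cup B_2)$. Submodularity alone only gives bounds of the form $\le 2k$. The upgrade uses the titanic hypotheses combined with posimodularity (\zcref{lem:diff}). For (a), at every intermediate vertex~$v$ of the path, the tripartition of $V$ at $v$ restricts to a tripartition of~$A$; titanicness of~$A$ then provides a heavy part $A'\subseteq A$ with $f(A')\ge f(A)$, and applying posimodularity to the pair $(R\cup B_1,A')$ converts the bound $f(A')\ge f(A)$ into the desired $f(R\cup B_1)\le k$. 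For (b), the dichotomy in the hypothesis is invoked: if $f(A)\le k$ is given, the bound follows by applying posimodularity to $(B_1\cup B_2,A)$ together with the just-established path bounds; if $\co{A}$ is titanic, one chooses the pair of danglings to merge using titanicness of $\co{A}$ applied to the tripartition $(B_1,B_2,\co{A}\setminus(B_1\cup B_2))$ of $\co{A}$, which forces one part to have $f$-value $\ge f(\co{A})=f(A)$ and thereby pins down $f(A)\le k$ and $f(B_1\cup B_2)\le k$ simultaneously.

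Termination is immediate: the dangling count is a nonnegative integer strictly decreasing by one per surgery and bounded by $|V|$. Each surgery can be identified and executed with $O(1)$ many $f$-evaluations plus linear-time tree manipulation, giving the $O(\gamma|V|)$ overall running time. The entire subtlety of the proof lies in the posimodular analyses of (a) and (b) above, which is precisely where the titanic hypotheses are used in an essential way and where the two alternative sufficient conditions of the proposition come into play.
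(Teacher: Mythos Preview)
Your iterative merge strategy is genuinely different from the paper's one-shot construction, but the width-preservation argument has a real gap.

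Look at your claim for (a). The ``heavy part'' $A'$ you obtain from titanicness of $A$ is one of the pieces $A\cap X_{e,u}$ of the tripartition at a path vertex; in particular $A'\subseteq A$, while $R\cup B_1$ meets $A$ only in $R\cap A$. If $A'$ lies on the $\co{R}$-side then $A'$ and $R\cup B_1$ are \emph{disjoint}, so posimodularity of the pair $(R\cup B_1,A')$ is the trivial inequality $f(R\cup B_1)+f(A')\ge f(R\cup B_1)+f(A')$. If instead $A'\subseteq R$, posimodularity gives $f(R\cup B_1)+f(A')\ge f((R\cup B_1)\setminus A')+0$, which is a \emph{lower} bound on $f(R\cup B_1)$, not an upper bound. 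The same defect appears in (b): $B_1\cup B_2$ and $A$ are disjoint, so posimodularity applied to that pair is vacuous; and in the $\co{A}$-titanic branch, knowing that one of $f(B_1),f(B_2),f(\co{A}\setminus(B_1\cup B_2))\ge f(A)$ does not by itself bound $f(B_1\cup B_2)$.

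What titanicness of $A$ actually buys (and what the paper exploits) is a bound on $f(X-A)$, not on $f(X-B_1)$: from $f(X\cap A)\ge f(A)$ and submodularity one gets $f(\co{X}-A)\le f(X)$. The paper uses this to orient every edge $e=uv$ (towards the side $v$ with $f(X_{e,v}-A)\le f(X_{e,v})$), shows these orientations are globally consistent, finds a sink $s$, and then performs a \emph{single} surgery: delete all $A$-leaves, suppress degree-$2$ vertices, subdivide one specific edge at $s$ and hang the new $A$-leaf there. Every cut of the resulting tree is of the form $X_{e,s}-A$, hence $\le k$ by the orientation. Your merged tree instead produces cuts of the form $X-B_1$ or $B_1\cup B_2$, and there is no reason these are $\le k$; indeed with three danglings your final tree contains the cut $(B_1\cup B_2,\,A\cup B_3)$, and nothing in the hypotheses forces $f(B_1\cup B_2)\le k$ for an arbitrary pair of danglings. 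So the approach needs either a completely different invariant or a mechanism for choosing the merge pair that you have not supplied.
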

\begin{proof}
    We may assume that $\abs{A}>1$ and $A\neq V$.
    For an edge $e$ of $T$ and a node $u$ of $T$, let $X_{e,u}$ be the set of elements of $V$ that are mapped by $L$ to nodes in the component of $T-e$ not containing $u$.
    For every edge $e=uv$ of $T$, 
    let us consider the partition $(X_{e,u}\cap A,X_{e,v}\cap A)$ of $A$. 
    Since $A$ is titanic, $f(X_{e,u} \cap A)\ge f(A)$
    or $f(X_{e,v}\cap A)\ge f(A)$.
    By submodularity, if $f(X_{e,v}\cap A)\ge f(A)$, then 
    \[  
    f(X_{e,v})+f(A)\ge f(X_{e,v}\cap A)+f(X_{e,v}\cup A) \ge f(A)+f(X_{e,u}-A), 
    \] 
    and therefore $f(X_{e,u}-A)\le f(X_{e,v})=f(X_{e,u})$.
    Similarly, if $f(X_{e,u}\cap A)\ge f(A)$, then $f(X_{e,v} - A)\le f(X_{e,v})$. 
    Thus, $f(X_{e,u}-A)\le f(X_{e,u})$ or $f(X_{e,v}-A)\le f(X_{e,v})$.
    Let us orient an edge $e=uv$ towards~$v$ if $f(X_{e,v}-A)\le f(X_{e,v})$.
    If both $f(X_{e,u}-A)\le f(X_{e,u})$ and $f(X_{e,v}-A)\le f(X_{e,v})$, then we orient the edge $e$ in both ways. 
    However, if $e$ is incident with a leaf, then we assume that $e$ is not oriented towards the leaf. This assumption is fine, because $e$ is oriented away from the leaf as 
    for every $v\in V$, $f(\{v\}-A)\le f(\{v\})$.
    Thus every edge of $T$ receives at least one orientation.

    Suppose that there are two edges $e$ and $d$ of~$T$ such that $e$ is not oriented towards an end of $d$ and $d$ is not oriented towards an end of $e$.
    Let $w$ be a degree-$3$ node of $T$ in the component of $T-\{e,d\}$ containing both an end of $e$ and an end of $d$.
    Let $Y_1=X_{ew}$, $Y_3=X_{dw}$, and $Y_2=V-(Y_1\cup Y_3)$.
    Since neither $e$ nor $d$ is oriented towards $w$, we have 
    $f(\co{Y_1}-A)\le f(\co{Y_1})$
    and $f(\co{Y_3}-A)\le f(\co{Y_3})$.
    Then
    \begin{align*}
    f(Y_1)+f(Y_3) &= f(\co{Y_1}) + f(\co{Y_3})\\
    &\ge f(\co{Y_1}-A) + f(\co{Y_3}-A)\\
    &\ge f(Y_1 \cup A) + f(\co{Y_3}-A) && \text{(symmetry)}\\
    &\ge f(A \cup Y_1 \cup Y_2) + f(Y_1 - A) && \text{(submodularity)}\\
    &\ge f(Y_3 - A) + f(Y_1 - A), && \text{(symmetry)}
    \end{align*}
    and therefore $f(Y_1-A)\le f(Y_1)$ or $f(Y_3-A)\le f(Y_3)$, contradicting the assumption that neither~$e$ nor~$d$ is oriented towards $w$.

    Let $T'$ be a directed tree obtained from $T$ by contracting all bidirected edges.
    Then every edge of $T'$ has exactly one orientation.
    Since $\abs{E(T')}<\abs{V(T')}$, there is a node $x$ of $T'$ such that every edge incident with $x$ is oriented towards~$x$.
    The node $x$ of $T'$ corresponds to a set $X$ of nodes in $T$
    and we pick any $s$ in $X$.
    Then 
    every node of $T$ has a directed path to $s$.
    Thus, $f(X_{es}-A)\le f(X_{es})$ for all edges $e$ of~$T$.

    Let $e_1$, $e_2$, $e_3$ be the three edges of~$T$ incident with $s$.
    Let $X_i=X_{e_is}$ for all $i\in\{1,2,3\}$.

    If $\co{A}$ is titanic, then there is $i\in \{1,2,3\}$ such that $f(X_i-A)\ge f(\co{A})$ and so $f(\co{A})\le f(X_i)\le k$.
    So we may now assume that $f(A)\le k$.

    Since $A$ is titanic, there is $i\in \{1,2,3\}$ such that $f(X_i\cap A) \ge f(A)$. Without loss of generality, let us assume that $f(X_1\cap A)\ge f(A)$.
    By submodularity, 
    \[ f(X_1)+f(A)\ge f(X_1\cap A)+f(X_1\cup A)\ge f(A)+f(\co{X_1}-A), \] 
    and therefore $f(\co{X_1}-A)\le f(X_1)$.
    Now to obtain a new branch-decomposition $(T_A,L_A)$ of $f \lhd A$ of width at most $k$,
    from $(T,L)$, 
    we subdivide the edge $e_1$ to insert a node $w$
    and attach a leaf $w'$ adjacent to $w$
    such that $L_A$ maps $A$ to $w'$. 
    We remove all existing leaves of $T$ that were images of a vertex in $A$ by $L$ and smoothen all degree-$2$ vertices. Now it is easy to verify that the width of $(T_A,L_A)$ is at most $k$.

    The branch-decomposition $(T_A,L_A)$ can be constructed in time $O(\gamma \abs{V})$ by evaluating $f(X_{e,u})$, $f(X_{e,u}-A)$, and $f(X_{e,v}-A)$ for every edge $e = uv$ of $T$.
\end{proof}

\zcref{prop:branch} implies that if both $A$ and $\co{A}$ are titanic, then there is an optimum-width branch-decomposition $(T,L)$ of $f$, where an edge of $T$ corresponds to the bipartition $(A,\co{A})$ of~$V$.
Hence also $\bw(f) \ge f(A)$.

\section{Testing if a set is titanic}\label{sec:search}
As a key subroutine of our algorithm, we need an algorithm for testing if a given set $A \subseteq V$ is titanic, and if not, finding a tripartition of $A$ witnessing that.
In this section, we give such an algorithm by using an observation that it corresponds to finding whether a polymatroid can be covered by three proper flats.

A polymatroid $g$ on a finite set $V$ is a function $g:2^V\to\mathbb{Z}$ 
such that
\begin{enumerate}[label=\rm(\roman*)]
    \item (nondecreasing) if $X\subseteq Y$, then $g(X)\le g(Y)$, 
    \item (submodular) for all $X$ and $Y$, we  have $g(X)+g(Y)\ge g(X\cap Y)+g(X\cup Y)$, and
    \item $g(\emptyset)=0$.
\end{enumerate}
For a set $X$, we call $g(X)$ the \emph{rank} of $X$.
A set $X$ is a \emph{flat} if it is a maximal set containing $X$ of rank $g(X)$. 
In other words, $X$ is a flat if and only if $g(X\cup\{a\})>g(X)$ for all $a\in \co{X}$.
A set $X$ is \emph{proper} if $X\neq V$.

Next, we give an algorithm for testing if a polymatroid can be covered by three proper flats.

\begin{proposition}[label=prop:coverfpt]
    Let $g:2^A\to\mathbb Z$ be a polymatroid and $\gamma$ be the time to compute $g(U)$ for any subset~$U$ of~$A$.
    Then in time $O(\gamma 3^{3g(A)} \abs{A})$, we can either
    \begin{itemize}
        \item find 
        a triple $(X,Y,Z)$ of three subsets of $A$ such that 
        $\max(g(X),g(Y),g(Z))<g(A)$ and $X \cup Y \cup Z = A$, or
        \item confirm that no such triple $(X,Y,Z)$ exists.
    \end{itemize}
\end{proposition}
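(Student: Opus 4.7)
The plan is to build a branching algorithm whose recursion tree has depth $O(g(A))$ and branching factor $3$, by exploiting a closure-containment property of flats. For $S\subseteq A$, write $\mathrm{cl}(S)=\{b\in A : g(S\cup\{b\})=g(S)\}$; a standard submodularity argument shows that $\mathrm{cl}(S)$ is the unique maximal superset of $S$ of rank $g(S)$, and in particular is a flat. Consequently, if a triple $(X,Y,Z)$ as sought exists, then $(\mathrm{cl}(X),\mathrm{cl}(Y),\mathrm{cl}(Z))$ is also such a triple and consists of proper flats. The key structural lemma I would prove first is: \emph{if $F$ is a flat of $g$ and $S\subseteq F$, then $\mathrm{cl}(S)\subseteq F$.} Indeed, for $b\in\mathrm{cl}(S)\setminus F$, submodularity applied to $S\cup\{b\}$ and $F$ yields $g(F)\ge g(F\cup\{b\})$, contradicting that $F$ is a flat not containing $b$.

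The algorithm recurses on a state $(X,Y,Z)$ of flats of $g$ with $g(X),g(Y),g(Z)<g(A)$, initialised to $(\mathrm{cl}(\emptyset),\mathrm{cl}(\emptyset),\mathrm{cl}(\emptyset))$. If $X\cup Y\cup Z=A$, the current triple is returned. Otherwise, pick any $a\in A\setminus(X\cup Y\cup Z)$ and branch over the three options of assigning $a$ to $X$, to $Y$, or to $Z$; in the branch modifying, say, $X$, compute $g(X\cup\{a\})$ and, if it is less than $g(A)$, recurse with $X$ replaced by $\mathrm{cl}(X\cup\{a\})$. Return failure if no branch succeeds.

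For correctness, I would verify by induction the invariant: whenever proper flats $(P_X,P_Y,P_Z)$ covering $A$ exist with $X\subseteq P_X$, $Y\subseteq P_Y$, $Z\subseteq P_Z$, some child of the current state satisfies the same invariant. Since $a\in P_X\cup P_Y\cup P_Z$, say $a\in P_X$, we have $X\cup\{a\}\subseteq P_X$, so by the closure-containment lemma $\mathrm{cl}(X\cup\{a\})\subseteq P_X$; moreover $g(X\cup\{a\})\le g(P_X)<g(A)$ guarantees the branch is not pruned. Hence whenever a cover exists, some leaf produces one, while all other leaves harmlessly fail.

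For the running time, note that along any root-to-leaf path each recursive step strictly increases the rank of one of $X,Y,Z$: the set $W$ being modified is a flat with $a\notin W$, so $g(W\cup\{a\})\ge g(W)+1$. Since each of the three ranks remains bounded above by $g(A)-1$, any path has length at most $3(g(A)-1)$, and hence the recursion tree has at most $3^{3g(A)}$ leaves. The work per node is $O(\gamma\abs{A})$, dominated by the closure computation (one oracle query per element of $A$), yielding the claimed total of $O(\gamma\, 3^{3g(A)}\abs{A})$. The main obstacle is the closure-containment lemma; once that is in place, both the invariant and the depth bound fall out cleanly.
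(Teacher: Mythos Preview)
Your proposal is correct and follows essentially the same approach as the paper: maintain a triple of flats, branch on an uncovered element by adding it to each of the three sets, and bound the recursion depth by the potential $3g(A)-(g(X)+g(Y)+g(Z))$. Your write-up is in fact more explicit than the paper's---you spell out the closure-containment lemma and the pruning step $g(X\cup\{a\})<g(A)$, both of which the paper leaves implicit.
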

\begin{proof}
    We design a recursive algorithm that takes as input additionally three sets $X',Y',Z'$, and solves the problem with the additional constraints that $X' \subseteq X$, $Y' \subseteq Y$, and $Z' \subseteq Z$.
    The original problem corresponds to this with $X'=Y'=Z'=\emptyset$.

We proceed by induction on $3g(A)-(g(X')+g(Y')+g(Z'))$.
    By submodularity, we may assume that $X'$, $Y'$, and $Z'$ are flats, by replacing each with a maximal set containing it with the same rank.
    Such replacement can be done in time $O(\gamma \abs{A})$.
    If $X' \cup Y' \cup Z'=A$, then we found a desired triple. 
    If $X'\cup Y'\cup Z' \neq A$, then pick an arbitrary $v\in A-(X'\cup Y'\cup Z')$.
    Then we run the recursive algorithm for three instances $(X'\cup\{v\},Y',Z')$, $(X',Y'\cup\{v\},Z')$, and $(X',Y',Z'\cup\{v\})$.
    Note that $g(X'\cup\{v\})>g(X')$, $g(Y'\cup\{v\})>g(Y')$, and $g(Z'\cup\{v\})>g(Z')$.

    Therefore, the total running time satisfies the recursion $T(\abs{A}, 3g(A)-(g(X')+g(Y')+g(Z'))) \le O(\gamma \abs{A})+ 3 T(\abs{A},3g(A)-(g(X')+g(Y')+g(Z'))-1)$.
    Thus $T(\abs{A},3g(A)-(g(X')+g(Y')+g(Z'))) \le O(\gamma 3^{3g(A)-(g(X')+g(Y')+g(Z'))} \abs{A})$, and $T(\abs{A}, 3g(A)-3g(\emptyset)) \le O(\gamma 3^{3g(A)} \abs{A})$.
\end{proof}

The following lemma is well known; for instance, it appeared in \cite[Lemma 3.2]{HO2006}.
\begin{lemma}[label=lem:makepartition]
    Let $f:2^V\to \mathbb Z$ be a connectivity function on a set~$V$ and $A$ a subset of~$V$.
    If $C_1$, $C_2$, $C_3$ are subsets of $A$ such that $C_1\cup C_2\cup C_3=A$ and $f(C_i)<f(A)$ for all $i\in\{1,2,3\}$, then 
    we can find a tripartition $(C_1',C_2',C_3')$ of $A$ such that 
    $f(C_i')<f(A)$ for all $i\in\{1,2,3\}$
    by computing~$f$ at most $3$ times.
\end{lemma}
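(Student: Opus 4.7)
The plan is to build the tripartition greedily, using posimodularity (\zcref{lem:diff}) as the main tool and spending the three evaluations on testing a natural candidate plus a fallback. Applying \zcref{lem:diff} to $C_1$ and $C_2$ yields
\[
f(C_1 - C_2) + f(C_2 - C_1) \le f(C_1) + f(C_2) < 2f(A),
\]
so at least one of $f(C_1 - C_2), f(C_2 - C_1)$ is already strictly less than $f(A)$. One evaluation of $f(C_2 - C_1)$ decides which, and if necessary I would swap the labels $C_1 \leftrightarrow C_2$ (at no cost) so that $f(C_2 - C_1) < f(A)$. The natural candidate tripartition is then $(C_1, C_2 - C_1, A - C_1 - C_2)$, whose first two pieces automatically have rank $< f(A)$; a second evaluation of $f(A - C_1 - C_2)$ tests the third piece and, if successful, returns the candidate.

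In the remaining case $f(A - C_1 - C_2) \ge f(A)$, I would rewrite $A - C_1 - C_2 = C_3 - (C_1 \cup C_2)$ and apply \zcref{lem:diff} to the pair $C_3$ and $C_1 \cup C_2$:
\[
f(C_3) + f(C_1 \cup C_2) \ge f(A - C_1 - C_2) + f(A - C_3) \ge f(A) + f(A - C_3).
\]
This rearranges to $f(A - C_3) \le f(C_3) + f(C_1 \cup C_2) - f(A)$, suggesting the fallback tripartition $(A - C_3, C_3, \emptyset)$, which is valid whenever $f(C_1 \cup C_2) \le f(A)$. The third evaluation would be used either to test $f(A - C_3)$ directly, or to check $f(C_1 \cup C_2)$ and branch.

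The main obstacle is the residual subcase in which $f(A - C_1 - C_2) \ge f(A)$ and also $f(A - C_3) \ge f(A)$. The two displayed inequalities then force $f(C_1 \cup C_2) > f(A)$, and submodularity gives $f(C_1 \cap C_2) \le f(C_1) + f(C_2) - f(C_1 \cup C_2) < f(A)$. The hard part is to write down an explicit third candidate tripartition---naturally one built around $C_1 \cap C_2$, such as $(C_1 \cap C_2, C_2 - C_1, A - C_2)$---and to verify that each of its three ranks is $< f(A)$ purely from the inequalities already in hand, without a fourth evaluation. Handling this endpoint case cleanly, and checking that exactly three evaluations suffice in every branch, is where the bookkeeping is delicate; the first two steps are routine posimodular uncrossing.
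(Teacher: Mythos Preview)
Your opening move---using posimodularity to uncross $C_1$ and $C_2$---is exactly right, and in fact it is the \emph{entire} idea needed. The paper does not then try to guess a full candidate tripartition; it simply repeats the same uncrossing on the remaining two pairs. After one evaluation you replace one of $C_1,C_2$ by the appropriate difference so that $C_1\cap C_2=\emptyset$; a second evaluation does the same for the pair $(C_2,C_3)$; a third for $(C_3,C_1)$. Each replacement is by a subset, so the union remains $A$ and previously disjoint pairs stay disjoint (the set you shrink only gets smaller). After three steps the three sets are pairwise disjoint and each still has $f$-value below $f(A)$, so they form the desired tripartition. No case analysis, no fallback.

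Your more elaborate route---testing the global candidate $(C_1,\,C_2-C_1,\,A-C_1-C_2)$, then falling back to $(A-C_3,\,C_3,\,\emptyset)$---stalls exactly where you flag it. In the residual subcase you have spent all three evaluations and propose $(C_1\cap C_2,\,C_2-C_1,\,A-C_2)$, but nothing you have recorded bounds $f(A-C_2)$ from above. The natural inequalities go the wrong way: for instance, posimodularity on $C_1$ and $A-C_2$ gives
\[
f(A-C_2)\;\ge\; f(C_1\cap C_2)+f(A-C_1-C_2)-f(C_1)\;>\;0,
\]
a lower bound, and I do not see any submodular/posimodular combination of your recorded data that yields the needed upper bound. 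So this branch cannot be certified without a fourth query. The fix is not more cleverness but less: abandon the global candidates after step one and just keep uncrossing pairs.
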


\begin{proof}
    If $C_1\cap C_2$ is nonempty, then 
    by \zcref{lem:diff}, we have $f(C_1-C_2)+f(C_2-C_1)<2f(A)$ and therefore 
    $f(C_1-C_2)<f(A)$ or $f(C_2-C_1)<f(A)$. If $f(C_1-C_2)<f(A)$, then we replace~$C_1$ by $C_1-C_2$ and otherwise replace $C_2$ by $C_2-C_1$.
    Thus, by computing~$f$ once, we can make $C_1\cap C_2=\emptyset$.
    We do the same step for pairs $(C_2,C_3)$ and $(C_3,C_1)$. 
\end{proof}

We now complete the main goal of this section.

\begin{proposition}[label=prop:titanic]
    Let $f:2^V\to \mathbb Z$ be a connectivity function on a set~$V$ and $\gamma$ the time to compute $f(X)$ for any subset $X$ of~$V$.
    For a given subset $A \subseteq V$, we can
    in time $O(3^{3f(A)} \abs{A} \SUB(\abs{A},\max_{B\subseteq A} f(B),\gamma) )$ 
    either find a tripartition $(C_1,C_2,C_3)$ of $A$ such that $f(C_i)<f(A)$ for all $i\in\{1,2,3\}$, or confirm that there is no such partition, that is, $A$ is titanic.
\end{proposition}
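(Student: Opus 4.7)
The plan is to reduce the problem to \zcref{prop:coverfpt} via an auxiliary polymatroid on $A$ that captures extendability to a subset of $A$ of small $f$-value. Specifically, I would define $g \colon 2^A \to \mathbb{Z}$ by
\[
    g(X) \;=\; \min_{X \subseteq Y \subseteq A} f(Y).
\]
Since every connectivity function is non-negative (as $2f(X) = f(X) + f(\co{X}) \ge f(\emptyset) + f(V) = 0$) and $f(\emptyset) = 0$, one immediately gets $g(\emptyset) = 0$ and $g(A) = f(A)$. Nondecreasingness of $g$ is clear because shrinking the feasible region cannot decrease the minimum, and submodularity follows from the standard exchange argument: if $Y_1, Y_2$ are minimizers for $g(X_1), g(X_2)$, then $Y_1 \cap Y_2$ and $Y_1 \cup Y_2$ are feasible for $g(X_1 \cap X_2)$ and $g(X_1 \cup X_2)$, and submodularity of $f$ closes the chain.

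Next I would argue that $A$ admits a tripartition $(C_1, C_2, C_3)$ with $f(C_i) < f(A)$ for all $i$ if and only if $A$ admits a cover by three subsets $X_1, X_2, X_3$ with $g(X_i) < g(A)$ for all $i$. The forward direction is immediate from $g(C_i) \le f(C_i)$. Conversely, from a $g$-cover $(X_1,X_2,X_3)$, the minimizers $Y_i \supseteq X_i$ returned by the submodular-minimization oracle satisfy $f(Y_i) = g(X_i) < f(A)$ and $Y_1 \cup Y_2 \cup Y_3 = A$, so \zcref{lem:makepartition} converts this $f$-cover into the desired tripartition in constantly many additional oracle calls.

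The algorithm is then to run \zcref{prop:coverfpt} on $g$. Each $g$-evaluation reduces to minimizing the submodular function $Z \mapsto f(X \cup Z)$ over $Z \subseteq A - X$, whose values lie in $[0, \max_{B\subseteq A} f(B)]$; this costs $\SUB(|A|, \max_{B \subseteq A} f(B), \gamma)$ and returns an explicit minimizer as a byproduct. Substituting this evaluation cost into the bound $O(\gamma_g \cdot 3^{3g(A)} |A|)$ from \zcref{prop:coverfpt} yields the claimed running time $O(3^{3f(A)} |A|\, \SUB(|A|, \max_{B\subseteq A} f(B), \gamma))$. The main steps to get right are the verification that $g$ is a polymatroid (especially submodularity) and the observation that the submodular-minimization oracle returns an actual witness $Y_i$, which is what allows us to lift the $g$-cover back to an $f$-cover and then invoke \zcref{lem:makepartition}; the remaining details are routine bookkeeping.
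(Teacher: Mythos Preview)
Your proposal is correct and follows essentially the same approach as the paper: define the auxiliary polymatroid $g(X)=\min_{X\subseteq Z\subseteq A} f(Z)$, apply \zcref{prop:coverfpt} to $g$, and then convert the resulting $g$-cover of $A$ into an $f$-cover before invoking \zcref{lem:makepartition}. The only cosmetic difference is that the paper lifts the $g$-cover to an $f$-cover by replacing each $X_i$ with a flat of $g$ containing it (which forces $g(X_i)=f(X_i)$), whereas you use the explicit minimizers $Y_i\supseteq X_i$ returned by the submodular-minimization oracle; both work equally well.
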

\begin{proof}
    Let $g(X)=\min_{X\subseteq Z\subseteq A} f(Z)$ for $X\subseteq A$.
    Then $g \colon 2^A \to \mathbb{Z}$ is a polymatroid.
    Note that computing $g(X)$ takes time 
    $\SUB(\abs{A},\max_{B\subseteq A} f(B),\gamma)$.
By \zcref{lem:makepartition}, it is enough to find three subsets $X$, $Y$, $Z$ of~$A$ such that $X\cup Y\cup Z=A$ and $\max(f(X),f(Y),f(Z))<f(A)$
    or confirm that such three subsets do not exist.    
    By applying the algorithm of \zcref{prop:coverfpt}, we either find a triple $(X,Y,Z)$ of subsets of $A$ such that $\max(g(X),g(Y),g(Z))<g(A)$ or confirm that no such triple $(X,Y,Z)$ exists,
    in time $O(3^{3g(A)} \abs{A} \SUB(\abs{A},\max_{B\subseteq A} f(B),\gamma))$.
    If we obtain a triple $(X,Y,Z)$, then by making each of $X$, $Y$, $Z$ maximal while keeping the same $g$ value, 
    we ensure that $g(X)=f(X)$, $g(Y)=f(Y)$, and $g(Z)=f(Z)$.
\end{proof}

\section{Partitioning into two titanic sets}\label{sec:titanic}
We give a proposition that allows us to find a set $A \subseteq V$ so that both $A$ and $\co{A}$ are titanic.
The idea for this proposition originates from the recent work of Korhonen~\cite[Lemma~9.6]{DBLP:conf/stoc/Korhonen25}, although there it is stated completely differently.

\begin{proposition}[label=prop:split]
    Let $f:2^V\to\mathbb Z$ be a connectivity function of branch-width at most $k$ and denote $n = |V|$.
There exists a subset $A$ of~$V$ satisfying the following two conditions.
    \begin{enumerate}[label=\rm(\roman*)]
        \item $\abs{A} \ge \frac{n}{3^{k+1}}$ and $\abs{\co{A}} \ge \frac{n}{3^{k+1}}$.
        \item Both $A$ and $\co{A}$ are titanic.
    \end{enumerate}
    Moreover, if a branch-decomposition of width at most $k$ is given,
    then we can find such a set $A$ in time $O(
    3^{3k}k n \SUB(n,kn,\gamma))$, where $\gamma$ is the time to compute $f(X)$ for any set $X$.
\end{proposition}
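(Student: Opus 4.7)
The plan is to start with a balanced bipartition of $V$ derived from an edge of the given branch-decomposition $(T,L)$, and then iteratively refine it by shrinking whichever side currently fails to be titanic, until both sides become titanic. First, I would choose a centroid edge $e$ of the subcubic tree $T$: each component of $T-e$ should contain between $n/3$ and $2n/3$ leaves. Such an edge is found in $O(n)$ time by orienting each edge towards its larger side and picking any edge incident to a resulting sink. The corresponding bipartition $(A,\co{A})$ then satisfies $|A|,|\co{A}|\ge n/3$ and $f(A)\le k$.

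At each iteration of the main loop, I apply \zcref{prop:titanic} to $A$ and, if $A$ is titanic, also to $\co{A}$. If both are titanic, the algorithm halts and returns $A$. Otherwise the non-titanic side, say $A$, yields a tripartition $(A_1,A_2,A_3)$ of $A$ with $f(A_i)<f(A)$; I replace $A$ by the largest piece $A_j$, which satisfies $|A_j|\ge |A|/3$. The new bipartition $(A_j,\co{A_j})$ has $f(A_j)\le f(A)-1$, and by symmetry of $f$ the value $f(\co{A_j})$ drops by the same amount.

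To bound the number of iterations, I would invoke two simple observations. First, for every $X\subseteq V$, $f(X)\ge 0$, because $2f(X)=f(X)+f(\co{X})\ge f(\emptyset)+f(V)=0$. Second, once $f(A)=0$, every tripartition of $A$ or of $\co{A}$ trivially contains a piece of $f$-value $\ge 0=f(A)=f(\co{A})$, so both sides are automatically titanic. Since $f(A)=f(\co{A})$ strictly decreases by at least $1$ per iteration, the loop halts within $k$ iterations. For the size bound, each side is shrunk at most $k$ times in total (only one side is shrunk per iteration), each shrinking multiplies its current size by at least $1/3$, and in between shrinkings a side only grows (absorbing the elements left behind by the other side). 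Starting from $\ge n/3$, each side ends with size $\ge (n/3)(1/3)^k = n/3^{k+1}$.

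For the running time, each leaf of $T$ gives $f(\{v\})\le k$, so by subadditivity $\max_{B\subseteq A} f(B)\le k|A|\le kn$. Hence each call to \zcref{prop:titanic} costs $O(3^{3k}\cdot n\cdot\SUB(n,kn,\gamma))$, and with $O(k)$ iterations and at most two calls per iteration the total cost is $O(3^{3k}\cdot k\cdot n\cdot\SUB(n,kn,\gamma))$, matching the claimed bound. I do not foresee a serious obstacle: the only subtlety is the bookkeeping for the size bound, which is handled by noting that growings never worsen the bound and each side undergoes at most $k$ shrinkings.
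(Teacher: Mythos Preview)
Your proposal is correct and follows essentially the same approach as the paper's proof: start from a balanced edge of the given branch-decomposition, and while one side fails to be titanic, replace it by the largest piece of the witnessing tripartition, using that $f$ strictly decreases and stays nonnegative to bound the number of rounds by $k$. The only cosmetic difference is the size analysis: the paper maintains the single invariant $\abs{A_i},\abs{\co{A_i}}\ge n/3^{i+1}$ at round $i$, whereas you track shrinkings per side and use that growings never hurt; both arguments yield the same $n/3^{k+1}$ bound.
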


\begin{proof}
    Let $(T,L)$ be a branch-decomposition of $f$ of width at most $k$. 
    Then we can find a partition $(A_0,B_0)$ of $V$ such that 
    $\abs{A_0} \ge n/3$, $\abs{B_0} \ge n/3$, and $f(A_0)\le k$.

    Suppose that we have a partition $(A_i,B_i)$ of $V$ such that 
$\abs{A_i} \ge \frac{n}{3^{i+1}}$, $\abs{B_i} \ge \frac{n}{3^{i+1}}$, and $f(A_i)\le k-i$
    for some integer $i\ge0$. 
    By applying the algorithm of \zcref{prop:titanic} to both~$A_i$ and~$B_i$, we can decide if $A_i$ or $B_i$ is titanic. 
    If both $A_i$ and $B_i$ are titanic, then we found a desired set $A:=A_i$. 
    Thus we may assume that at least one of $A_i$ and $B_i$ is not titanic. 
    Without loss of generality, let us assume that $A_i$ is not titanic. 
    The algorithm in \zcref{prop:titanic} provides a partition $(C_1,C_2,C_3)$ of $A_i$ such that $f(C_1), f(C_2), f(C_3)<f(A_i)$.
    Without loss of generality, let us assume that $\abs{C_1}\ge \abs{A_i}/3$.
    Let $A_{i+1} = C_1$ and let $B_{i+1}=V-A_{i+1}$. 
    Then $\abs{A_{i+1}}\ge \frac13 \abs{A_i} \ge \frac{n}{3^{i+2}}$. 
    Trivially, $\abs{B_{i+1}}\ge \abs{B_i} \ge \frac{n}{3^{i+1}} \ge \frac{n}{3^{i+2}}$.
    Since $f(A_{i+1})<f(A_i)$, we deduce that $f(A_{i+1})\le k-(i+1)$.

    Since $f(A_{i+1})\ge 0$, this process must stop at some $0\le i<k$.
    Thus we have found a desired set $A$.

    This algorithm can be implemented by $O(k)$ calls to the algorithm of \zcref{prop:titanic} and the oracle for $f$.
    Because $\bw(f) \le k$, we have $f(\{v\}) \le k$ for all $v \in V$, so the submodularity of $f$ implies $\max_{X \subseteq V} f(X) \le kn$.
    Thus, the algorithm runs in time 
    $O(k\cdot (\gamma+3^{3k} n \SUB(n,kn,\gamma)))$.
\end{proof}

\section{Algorithm}
\label{sec:algorithm}
We then present our main algorithm.

\begin{proposition}[label=prop:approx-exact]
    Let $n$ be a positive integer.
    Let $f:2^V\to \mathbb Z$ be a connectivity function on an $n$-element set~$V$.
    Let us assume that $\gamma$ is the time to compute $f(X)$ for any subset $X$ of~$V$.
    Suppose that we are given a branch-decomposition of~$f$ of width at most $\ell$.
    In time 
$O(3^{4\ell}\ell n \SUB(n,k n,\gamma)\log n
    + \ell 3^{(\ell+1)(8k+9)}\gamma n)$, 
    we can either find a branch-decomposition of~$f$ of width at most~$k$, 
    or confirm that the branch-width of~$f$ is larger than~$k$.    
\end{proposition}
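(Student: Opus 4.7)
The plan is a recursive divide-and-conquer algorithm. On input $(f, k, \ell, (T,L))$, if $n \le 2\cdot 3^{\ell+1}$ we fall back to \zcref{thm:os} as a base case. Otherwise we apply \zcref{prop:split} to $(T,L)$ to find $A \subseteq V$ with both $A$ and $\co{A}$ titanic and $\abs{A},\abs{\co{A}} \ge n/3^{\ell+1} \ge 2$. If $f(A) > k$ we return ``no'', which is sound because the remark following \zcref{prop:branch} yields $\bw(f) \ge f(A)$ whenever both complements are titanic. Otherwise $f(A) \le k$, and two applications of \zcref{prop:branch} convert $(T,L)$ in time $O(\gamma n)$ into width-$\ell$ branch-decompositions of $f\lhd A$ and $f\lhd \co{A}$. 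We recurse on both, propagating ``no'' if either call returns ``no''; otherwise we glue the two returned width-$\le k$ decompositions by identifying the leaves labelled $A$ and $\co{A}$, producing a branch-decomposition of $f$ whose every edge-width is inherited from a recursive piece or equals $f(A) \le k$.

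Correctness is by induction on $n$. The base case is immediate from \zcref{thm:os}, and soundness in the inductive step is immediate from the gluing description. For completeness, if $\bw(f) \le k$ then applying \zcref{prop:branch} to any width-$k$ decomposition (with both $A$ and $\co{A}$ titanic) forces both $\bw(f\lhd A), \bw(f\lhd \co{A}) \le k$ and $f(A)\le k$, so both recursive calls succeed by the inductive hypothesis, and the gluing returns a valid width-$\le k$ decomposition.

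The main obstacle is the recursion-time analysis. Because $\abs{A},\abs{\co{A}} \ge n/3^{\ell+1}$, each child subproblem has size at most $(1 - 1/3^{\ell+1})n + 1$, so the recursion tree has depth $O(3^{\ell+1}\log n)$. Each original element of $V$ lies in at most one subproblem per level, contributing $O(n \cdot 3^{\ell+1}\log n)$ to $\sum_v n_v$ summed over internal nodes $v$; since the algorithm only splits when $n > 2 \cdot 3^{\ell+1}$ every leaf has size at least $3$, so the recursion tree has $O(n)$ internal nodes, and each introduces two placeholder elements that persist in a single subtree of depth $O(3^{\ell+1}\log n)$, contributing the same order. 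The per-node work of \zcref{prop:split} at a subproblem of size $n_v$ is $O(3^{3\ell}\ell n_v \SUB(n_v, \ell n_v, \gamma))$, and by monotonicity of $\SUB$ the total splitting cost telescopes to $O(3^{4\ell}\ell n \SUB(n,\ell n,\gamma) \log n)$; assuming $\ell \ge k$ (otherwise the given decomposition $(T,L)$ already has width at most $k$), this matches the first term of the claimed bound. For the base cases, each of the $O(n)$ leaves has size $O(3^{\ell+1})$ and is handled by \zcref{thm:os} in time $O(\gamma 3^{(\ell+1)(8k+9)}\ell)$, so the total base-case cost is $O(\ell 3^{(\ell+1)(8k+9)} \gamma n)$. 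Summing the two contributions gives the stated running time.
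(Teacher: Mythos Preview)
Your approach is essentially the paper's: split via \zcref{prop:split}, check $f(A)\le k$, recurse on both sides using \zcref{prop:branch}, glue, and fall back to \zcref{thm:os} once $n$ is small. Correctness and the recursion-depth bound are fine.

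There is one small but genuine gap in the running-time bound. You write the per-node cost of \zcref{prop:split} as $O(3^{3\ell}\ell\, n_v\, \SUB(n_v,\ell n_v,\gamma))$ and then assert that ``assuming $\ell\ge k$, this matches the first term of the claimed bound.'' That inequality goes the wrong way: $\SUB$ is nondecreasing in its second argument, so $\ell\ge k$ gives $\SUB(n,\ell n,\gamma)\ge \SUB(n,kn,\gamma)$, and your bound is \emph{larger} than the stated one (via \zcref{thm:submoduarmin} it scales like $M^3$, so you lose a factor $(\ell/k)^3$). The paper handles this by first testing $f(\{v\})\le k$ for every $v\in V$ and returning ``no'' if any test fails; then submodularity gives $\max_X f(X)\le kn$, and since every placeholder $A$ introduced later also satisfies $f(A)\le k$, the bound $\max_X f(X)\le k\,n_v$ persists in every subproblem, so the $\SUB$ calls really cost $\SUB(n_v,k n_v,\gamma)$. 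Add that upfront check and your argument goes through.

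A secondary nit: your base-case threshold $n\le 2\cdot 3^{\ell+1}$ makes the call to \zcref{thm:os} cost $O(\gamma\,(2\cdot 3^{\ell+1})^{8k+9}\ell)$, picking up an extra $2^{8k+9}$ factor not absorbed by an absolute constant in the stated bound. The paper's threshold $n\le 3^{\ell+1}$ already suffices to guarantee $\abs{A},\abs{\co{A}}\ge 2$ (since $\abs{A}\ge n/3^{\ell+1}>1$ forces $\abs{A}\ge 2$), so you can simply use it.
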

\begin{proof}
    We proceed by induction on~$n$, and prove that there is such an algorithm with the running time $T(n)$.
    We may assume that $\ell>k$.
    We may also assume that $f(\{v\})\le k$ for all $v\in V$, because otherwise the branch-width of $f$ is larger than $k$.
    Therefore $f(X)\le kn$ for all sets~$X$ by submodularity.
    
    If $n\le 3^{\ell+1}$, then we use \zcref{thm:os} to find a branch-decomposition of width at most $k$ in time $O(\gamma (3^{\ell+1})^{8k+9}\log 3^{\ell+1})\le O(\gamma \ell 3^{(\ell+1)(8k+9)})$.
    Thus we may assume that $n>3^{\ell+1}$.

    Let $(T,L)$ be a given branch-decomposition of width at most $\ell$.
    By \zcref{prop:split}, we can find a subset $A$  of~$V$ such that 
    $\abs{A} \ge 3^{-\ell-1}n > 1$, $\abs{\co{A}} \ge 3^{-\ell-1}n > 1$,
    and both $A$ and $\co{A}$ are titanic in time 
$O(3^{3\ell}\ell n \SUB(n,kn,\gamma))$.
    We may assume that $f(A)\le k$ because otherwise $f$ has branch-width larger than $k$ by \zcref{prop:branch}.

    Since $A$ is titanic, in time $O(\gamma n)$, we can convert $(T,L)$
    into a branch-decomposition $(T_1,L_1)$ of $f\lhd A$ of width at most $\ell$ by \zcref{prop:branch}.
    By running our algorithm for $f\lhd A$ given with $(T_1,L_1)$, 
    we either find a branch-decomposition $(T_1',L_1')$ of~$f\lhd A$ of width at most $k$ or confirm that the branch-width of $f\lhd A$ is larger than $k$, 
    which implies that the branch-width of $f$ is larger than $k$ by \zcref{prop:branch}.
    So we may assume that we obtain a branch-decomposition $(T_1',L_1')$ of~$f\lhd A$ of width at most $k$ in time $O(\gamma n)+T(\abs{\co{A}}+1)$.
    Similarly, since $\co{A}$ is titanic, we may assume that we obtain a branch-decomposition $(T_2',L_2')$ of~$f\lhd \co{A}$ of width at most $k$ in time $O(\gamma n)+T(\abs{A}+1)$.
    Now we glue $T_1'$ and $T_2'$ by attaching the leaf of $A$ in $T_1'$ with the leaf of $\co{A}$ in $T_2'$ 
    to obtain a branch-decomposition $(T',L')$ of $f$ of width at most $k$.

    For the running time, we have 
    \[ T(n)\le 
    O(3^{3\ell}\ell n \SUB(n,kn,\gamma))
    +O(\gamma n)+T(\abs{A}+1)+T(\abs{\co{A}}+1).\]
    Since $\max(\abs{A}+1,\abs{\co{A}}+1)\le (1-3^{-\ell-1})n+1 < n$, this recursion terminates and its depth is at most $O(3^{\ell}\log n)$.
    So we deduce that $T(n)\le O(3^{4\ell}\ell n \SUB(n,kn,\gamma)\log n
+  \ell 3^{(\ell+1)(8k+9)} \gamma n)$.
\end{proof}

To lift the algorithm of \zcref{prop:approx-exact} to the full algorithm of \zcref{thm:main-simplified}, we need to remove the assumption that we already have a branch-decomposition.
This can be done in two ways: by iterative compression and by using the FPT-approximation algorithm of Oum and Seymour~\cite{OS2004}, both resulting in roughly the same running time.
We present both of them here.

Let $3^V$ be the set of all pairs $(X,Y)$ of disjoint subsets of~$V$.
For a connectivity function $f:2^V\to\mathbb Z$, a function $f^*:3^V\to \mathbb Z$ is called an \emph{interpolation} of $f$, if it satisfies the following \cite{OS2004}.
\begin{enumerate}[label=\rm(\roman*)]
    \item $f^*(X,\co{X})=f(X)$ for all $X\subseteq V$.
    \item (monotone) If $C\cap D=\emptyset$, $A\subseteq C$, and $B\subseteq D$, then $f^*(A,B)\le f^*(C,D)$.
    \item (submodular) $f^*(A,B)+f^*(C,D)\ge f^*(A\cap C,B\cup D)+f^*(A\cup C, B\cap D)$.
    \item $f^*(\emptyset,\emptyset)=f(\emptyset)$.
\end{enumerate}
It is known that $f_{\min}(X,Y)=\min_{X\subseteq Z\subseteq V-Y} f(Z)$ is an interpolation of $f$ \cite[Proposition 4.2]{OS2004} and it can be computed by \zcref{thm:submodularmin}.
But in many cases, there is an interpolation that can be computed faster than $f_{\min}$, see \cite{OS2004}.

\begin{theorem}[label=thm:main,store=main]
    Let $n$ be a positive integer.
    Let $f:2^V\to \mathbb Z$ be a connectivity function on an $n$-element set~$V$.
Let $f^*:3^V\to\mathbb Z$ be an interpolation of $f$.
    Let $\gamma'$ be the time to compute $f^*(X,Y)$ for any disjoint subsets $X$, $Y$ of~$V$.
    In time 
$O(3^{8k}k n^2 \SUB(n,k n,\gamma')\log n
    +  k 3^{(2k+1)(8k+9)}\gamma'n^2)$
    we can either find a branch-decomposition of~$f$ of width at most~$k$, 
    or confirm that the branch-width of~$f$ is larger than~$k$.

    If $f^*=f_{\min}$ and computing $f(X)$ takes the time $\gamma$ for any set $X$, then it can be done in time $O(3^{8k} k n^2 \SUB(n,kn,\gamma)\log n+k 3^{(2k+1)(8k+9)} n^2 \SUB(n,kn,\gamma))$.
    In particular, in this case, 
    the running time is at most 
$2^{O(k^2)} \gamma n^6 \log^2 n$.
\end{theorem}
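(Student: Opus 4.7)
The plan is to apply iterative compression: process the elements of $V$ one at a time, invoking \zcref{prop:approx-exact} as a compression subroutine at each step. Fix an arbitrary ordering $v_1,\dots,v_n$ of $V$ and set $V_i = \{v_1,\dots,v_i\}$. Define $f_i\colon 2^{V_i}\to\mathbb{Z}$ by
\[
    f_i(X) \;=\; f^*(X,\, V_i\setminus X) \;=\; \min_{Z\subseteq V,\; Z\cap V_i = X} f(Z).
\]
A short verification using the interpolation axioms shows that each $f_i$ is a connectivity function on $V_i$ whose oracle can be evaluated in time $O(\gamma')$, and the family enjoys the monotonicity $\bw(f_i)\le\bw(f)$: given any branch-decomposition $(T,L)$ of $f$, deleting the leaves outside $V_i$ and suppressing the resulting degree-$2$ internal vertices yields a subcubic tree $T'$ with leaf set $V_i$; for every edge $e'$ of $T'$ whose removal induces the $V_i$-partition $(A,V_i\setminus A)$, the path in $T$ that collapsed to $e'$ contains an edge inducing some $V$-partition $(X,V\setminus X)$ with $X\cap V_i = A$, so $f_i(A)\le f(X)$ is at most the width of $(T,L)$.

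The algorithm maintains the invariant that after step $i$ it holds a branch-decomposition $(T_i,L_i)$ of $f_i$ of width at most $k$. After first verifying $f(\{v\})\le k$ for every $v\in V$ (otherwise $\bw(f)>k$), the base case $i=2$ is the single-edge tree of width $f_2(\{v_1\})\le f(\{v_1\})\le k$. For the inductive step from $i$ to $i+1$, I would subdivide an arbitrary edge of $T_i$ and attach $v_{i+1}$ as a new leaf, producing a branch-decomposition of $f_{i+1}$ of width at most $2k$: each partition $(B,V_i\setminus B)$ appearing in $T_i$ becomes, in the subdivided tree, a partition whose width is either $f_{i+1}(B)$ or $f_{i+1}(B\cup\{v_{i+1}\})$; the identity $\min(f_{i+1}(B),\ f_{i+1}(B\cup\{v_{i+1}\})) = f_i(B)\le k$ makes one of the two at most $k$, and submodularity or posimodularity of $f$ (applied to the minimizer of $f_i(B)$ together with $\{v_{i+1}\}$), combined with $f(\{v_{i+1}\})\le k$, bounds the other by $f_i(B)+f(\{v_{i+1}\})\le 2k$. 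I would then invoke \zcref{prop:approx-exact} with $\ell=2k$ on $f_{i+1}$; it either returns a branch-decomposition of $f_{i+1}$ of width at most $k$ and we continue, or it reports $\bw(f_{i+1})>k$, in which case the monotonicity above forces $\bw(f)>k$ and we answer ``no''. After step $i=n$, since $f_n = f$, we have the desired decomposition.

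Summing $O(n)$ invocations of \zcref{prop:approx-exact} at $\ell=2k$ over connectivity functions of size at most $n$ with oracle time $O(\gamma')$ gives exactly the bound $O(3^{8k}kn^2\SUB(n,kn,\gamma')\log n + k\cdot 3^{(2k+1)(8k+9)}\gamma'n^2)$ claimed in the general case. For $f^* = f_{\min}$, evaluating $f^*$ is itself the submodular minimization used inside \zcref{prop:titanic}, so the inner $\SUB$ is realized by a single $f^*$-query rather than a nested $\SUB$ call; this yields the specialized bound with $\SUB(n,kn,\gamma')$ replaced by $\SUB(n,kn,\gamma)$ and the trailing $\gamma'$ replaced by $\SUB(n,kn,\gamma)$, which together with $\SUB(n,kn,\gamma) = O(\gamma k^3 n^4\log n)$ from \zcref{thm:submoduarmin} is at most $2^{O(k^2)}\gamma n^6\log^2 n$. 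The delicate points are justifying the monotonicity $\bw(f_i)\le\bw(f)$ and the exact $\le 2k$ width blowup in the extension; an alternative route---bootstrapping \zcref{prop:approx-exact} from the Oum--Seymour~\cite{OS2004} FPT-approximation, which directly yields a branch-decomposition of $f$ of width at most $3k+c$---bypasses both at the cost of a larger constant in the $3^{O(k)}$ factor.
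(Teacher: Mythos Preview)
Your approach is exactly the paper's: iterative compression over prefixes $V_i$, with $f_i(X)=f^*(X,V_i\setminus X)$, and a call to \zcref{prop:approx-exact} at $\ell=2k$ after attaching the new element. There is, however, one genuine slip. The displayed equality
\[
f^*(X,V_i\setminus X)\;=\;\min_{Z\subseteq V,\;Z\cap V_i=X} f(Z)
\]
holds only when $f^*=f_{\min}$, not for an arbitrary interpolation. Your justification of the $\le 2k$ blowup then relies on this: you invoke a ``minimizer of $f_i(B)$'' and the identity $\min\bigl(f_{i+1}(B),\,f_{i+1}(B\cup\{v_{i+1}\})\bigr)=f_i(B)$, neither of which is available for general $f^*$ (only the inequality $\ge$ survives, by monotonicity). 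So as written, the argument establishes the theorem only in the $f^*=f_{\min}$ case.

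The paper's fix is simpler and uses the interpolation axioms directly. For $X\subseteq V_i$, axiom~(iii) with $(A,B)=(X,V_i\setminus X)$ and $(C,D)=(X,\{v_{i+1}\})$ gives
\[
f^*(X,V_i\setminus X)+f^*(X,\{v_{i+1}\})\;\ge\;f^*(X,V_{i+1}\setminus X)+f^*(X,\emptyset),
\]
and axioms~(ii),(iv) give $f^*(X,\emptyset)\ge f^*(\emptyset,\emptyset)=0$ and $f^*(X,\{v_{i+1}\})\le f^*(V\setminus\{v_{i+1}\},\{v_{i+1}\})=f(\{v_{i+1}\})\le k$; hence $f_{i+1}(X)\le f_i(X)+k$, and the symmetric side of each edge is handled by symmetry of $f_{i+1}$. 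Your restriction argument for $\bw(f_i)\le\bw(f)$ is fine once the bogus equality is dropped, since $f_i(A)=f^*(A,V_i\setminus A)\le f^*(X,V\setminus X)=f(X)$ already follows from axiom~(ii). The running-time bookkeeping, including the de-nesting of $\SUB$ when $f^*=f_{\min}$, matches the paper.
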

\begin{proof}
    We use the iterative compression technique.
    We may assume that $f(\{v_i\})\le k$ for each~$i$, because otherwise the branch-width of $f$ is larger than~$k$.
Let $\{v_1,v_2,\ldots,v_n\}=V$.
    Let $V_i=\{v_1,v_2,\ldots,v_i\}$ for each $i\in\{1,2,\ldots,n\}$.
Let $f_i(X)=f^*(X,V_i-X)$ for $X\subseteq V_i$.
    Then $f_i$ is a connectivity function on $V_i$
    and computing $f_i(X)$ can be done in time 
    $\gamma'$.

    If $(T,L)$ is a branch-decomposition of~$f$ of width at most~$k$,
    then we can restrict $T$ to the leaves in~$V_i$ to obtain a branch-decomposition of~$f_i$ of width at most $k$.
    Therefore, if $f$ has branch-width at most $k$, then every $f_i$ has branch-width at most $k$.

    Suppose that we have a branch-decomposition $(T_i,L_i)$ of $f_i$ of width at most $k$ for some $i\ge 2$.
    Let us obtain a branch-decomposition $(T_{i+1}',L_{i+1}')$ of $f_{i+1}$ 
    by attaching a new leaf corresponding to $v_{i+1}$ arbitrarily to $(T_i,L_i)$. 
    Then the width of $(T_{i+1}',L_{i+1}')$ is at most $2k$, 
    because for every $X\subseteq V_i$, $f_{i+1}(X)=f^*(X,V_{i+1}-X)\le f^*(X,V_i-X)+f^*(X,\{v_{i+1}\}) \le f_i(X)+f(\{v_{i+1}\})\le f_i(X)+k$. 
    By applying \zcref{prop:approx-exact} for~$f_{i+1}$ with $\ell=2k$, 
    we obtain a branch-decomposition $(T_{i+1},L_{i+1})$ of~$f_{i+1}$ of width at most $k$ if it exists 
    in time 
    $O(3^{8k}k n \SUB(n,kn,\gamma')\log n
    + \gamma' k 3^{(2k+1)(8k+9)}n)$.

    Thus the total running time to obtain a branch-decomposition of $f=f_n$ of width at most $k$, if it exists,
    is 
    $O(n\cdot (3^{8k}k n \SUB(n,kn,\gamma')\log n
    +  k 3^{(2k+1)(8k+9)}\gamma' n))$.

    If $f^*=f_{\min}$, then $\gamma'\le \SUB(n,kn,\gamma)$, because $f(X)\le\sum_{v\in X}f(\{v\})\le  k\abs{X}$ by the submodularity.
    Furthermore, minimizing $f_{i}$ can be done by computing $f_{\min}$, and therefore we can take $\SUB(n,kn,\gamma')=\SUB(n,kn,\gamma)$.
\end{proof}

\zcref{thm:main} almost implies \zcref{thm:main-simplified}, with the algorithm running only slightly slower than desired.

Instead of using the iterative compression, we can use the existing fixed-parameter tractable approximation algorithm for branch-width, due to Oum and Seymour~\cite{OS2004}. 
Hicks and Oum~\cite{HO2011} noted that the following theorem can be deduced from the argument of Oum and Seymour~\cite{OS2004}, which was originally written only for the case that $f(\{v\})\le 1$ for all $v\in V$.
\begin{theorem}[note={Oum and Seymour~\cite{OS2004}},label=thm:approx]
    Let $n$ be a positive integer.
    Let $f:2^V\to \mathbb Z$ be a connectivity function on an $n$-element set~$V$
    and let $c'=\max\{f(\{v\}):v\in V\}$. 
    Let $\gamma$ be the time to compute $f(X)$ for any set~$X$.
    Then in time $O(2^{3k+c'} n^2 \SUB(n,kn,\gamma))$, we can either find a branch-decomposition of $f$ of width at most $3k+c'$ 
    or confirm that the branch-width of~$f$ is larger than $k$.
\end{theorem}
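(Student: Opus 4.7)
The plan is to reproduce the FPT approximation algorithm of Oum and Seymour~\cite{OS2004}, extending their argument from the special case $\max_v f(\{v\}) \le 1$ to the general bound $c' \ge 1$, as indicated in~\cite{HO2011}. Their algorithm maintains a subcubic tree whose leaves are labeled by a partition of $V$ refining the trivial singleton partition, and attempts to coarsen/refine it step by step; at each step, the algorithm decides whether a block can be split along a cut of value at most $3k+c'$ in the interpolation $f_{\min}$, and if so commits to that split.

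First I would set up the submodular function minimization subroutine: for each current block $X$ and each candidate boundary signature --- a pair of disjoint subsets of bounded size forced to lie on opposite sides of the split --- run one instance of \zcref{thm:submoduarmin} on $f_{\min}$. Enumerating the signatures contributes the $2^{3k+c'}$ factor, and iterating over $O(n^2)$ choices of the element-pair used to seed a refinement contributes the $n^2$ factor, so the overall work is $O(2^{3k+c'} n^2 \SUB(n, kn, \gamma))$, where the bound $kn$ on $\abs{f(X)}$ follows from submodularity together with the assumption $c' \le k$ (otherwise we already conclude $\bw(f) > k$). If some signature yields a cut of value $\le 3k+c'$, refine and recurse; once no further refinement is possible at any block, the resulting tree gives a branch-decomposition of width at most $3k+c'$.

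The correctness half shows that if no refinement is possible at some stage, then $\bw(f) > k$. The standard way to see this is via a tangle-like obstruction: the collection of separations avoided by the algorithm, together with the bounds enforced during refinement, satisfies the axioms for an obstruction of order $k+1$ to branch-width $\le k$. The main obstacle will be transplanting the $c'=1$ argument of Oum-Seymour to general $c'$: every inequality in their proof that previously used ``adding or removing a single element changes $f$ by at most $1$'' gets replaced by the corresponding inequality with $c'$ in place of $1$, and these bounds add exactly once in the chain leading to the width guarantee, shifting it from $3k+1$ to $3k+c'$. Since the original argument is oblivious to the exact value of the singleton bound --- only using it additively in a fixed number of places --- this substitution is routine, and as noted by Hicks and Oum~\cite{HO2011}, no new ideas are required.
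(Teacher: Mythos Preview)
The paper does not give its own proof of this theorem. It is stated with the attribution ``Oum and Seymour~\cite{OS2004}'' and prefaced only by the remark that Hicks and Oum~\cite{HO2011} observed the extension from the case $f(\{v\})\le 1$ to general $c'$; no argument follows, and the paper immediately uses the statement as a black box in the proof of \zcref{thm:main2}. So there is no proof in the paper to compare your proposal against.

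Your plan is in the right spirit---you correctly identify that the result is the Oum--Seymour FPT approximation algorithm with the singleton bound relaxed from $1$ to $c'$, exactly as the paper and~\cite{HO2011} indicate---but as written it is a sketch rather than a proof, and a few details are off. Your description ``a partition of $V$ refining the trivial singleton partition'' is backwards (you mean a partition that the singleton partition refines), and the explanation of the $2^{3k+c'}$ factor via ``boundary signatures'' is vague: in~\cite{OS2004} this factor arises from enumerating subsets of a representative set of size at most $3k+c'$ when testing whether a partial decomposition can be extended. The correctness direction is also only gestured at. None of this is fatal for a plan, but if you intend to actually write the proof rather than cite it, you would need to spell out the Oum--Seymour machinery (interpolation $f_{\min}$, the greedy refinement procedure, and the obstruction certifying $\bw(f)>k$) and check each inequality with $c'$ in place of~$1$.
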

By combining with \zcref{prop:approx-exact}, we deduce the following.
\begin{theorem}[label=thm:main2]
    Let $n$ be a positive integer.
    Let $f:2^V\to \mathbb Z$ be a connectivity function on an $n$-element set~$V$.
    Let us assume that $\gamma$ is the time to compute $f(X)$ for any subset $X$ of~$V$.
    In time 
$O(2^{4k} n^2 \SUB(n,kn,\gamma)+3^{16k}k n \SUB(n,kn,\gamma)\log n
    +  k 3^{(4k+1)(8k+9)}\gamma n)$, 
    we can either find a branch-decomposition of~$f$ of width at most~$k$, 
    or confirm that the branch-width of~$f$ is larger than~$k$.   
    
    In particular,
    the running time is at most 
    $2^{O(k)} \gamma n^6 \log n + 2^{O(k^2)}\gamma n$.
\end{theorem}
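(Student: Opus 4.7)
The plan is to plug the FPT-approximation of \zcref{thm:approx} into the conditional exact algorithm of \zcref{prop:approx-exact}, replacing the role that iterative compression played in the proof of \zcref{thm:main}. All the substantive work is already encapsulated in those two results, so this proof is essentially a calculation combining their running times.

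First I would observe that we may assume $c' = \max_{v \in V} f(\{v\}) \le k$: for any branch-decomposition of $f$, the edge incident to the leaf corresponding to $v$ has width $f(\{v\})$, so if $c' > k$ we already know $\bw(f) > k$ and can terminate immediately. I would then run the algorithm of \zcref{thm:approx} with parameter $k$. This either confirms $\bw(f) > k$ (in which case we are done) or returns a branch-decomposition of $f$ of width at most $3k + c' \le 4k$, in time $O(2^{3k+c'} n^2 \SUB(n,kn,\gamma)) \le O(2^{4k} n^2 \SUB(n,kn,\gamma))$.

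Next I would feed this branch-decomposition, with $\ell = 4k$, into \zcref{prop:approx-exact}. That routine either returns a branch-decomposition of $f$ of width at most $k$, or confirms $\bw(f) > k$, in time
\[
O\bigl(3^{4\ell} \ell\, n\, \SUB(n,kn,\gamma) \log n + \ell\, 3^{(\ell+1)(8k+9)} \gamma n\bigr)
= O\bigl(3^{16k} k n\, \SUB(n,kn,\gamma) \log n + k\, 3^{(4k+1)(8k+9)} \gamma n\bigr).
\]
Adding the two running times gives the claimed bound exactly.

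For the final ``in particular'' estimate, I would substitute the bound $\SUB(n,kn,\gamma) \le O(\gamma n (kn)^3 \log n) = O(\gamma k^3 n^4 \log n)$ from \zcref{thm:submoduarmin}. Then $2^{4k} n^2 \SUB(n,kn,\gamma) = 2^{O(k)} \gamma n^6 \log n$, the middle term becomes $2^{O(k)} \gamma n^5 \log^2 n$, which is absorbed into the first, and the last term is $2^{O(k^2)} \gamma n$. The result is the stated $2^{O(k)} \gamma n^6 \log n + 2^{O(k^2)} \gamma n$.

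There is no real obstacle here: the proof is a two-line wrapper around \zcref{thm:approx} and \zcref{prop:approx-exact}, and the only thing to be careful about is the small reduction $c' \le k$ so that the approximation produces a decomposition of width $O(k)$ rather than something depending on an a priori unbounded $c'$. The only other point to double-check is that the parameter passed to \zcref{prop:approx-exact} is $\ell = 4k > k$, as required by that proposition.
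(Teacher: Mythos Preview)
Your proof is correct and follows essentially the same approach as the paper: first reduce to $c'\le k$, then apply \zcref{thm:approx} to obtain a branch-decomposition of width at most $4k$, and finally invoke \zcref{prop:approx-exact} with $\ell=4k$, concluding with the substitution from \zcref{thm:submoduarmin}. Your calculation of the ``in particular'' bound is slightly more explicit than the paper's, but the argument is the same.
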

Note that \zcref{thm:main2} implies \zcref{thm:main-simplified}.
\begin{proof}
    We may assume that $f(\{v\})\le k$ for all $v\in V$, because otherwise the branch-width of~$f$ is larger than $k$.
    By \zcref{thm:approx}, in time $O(2^{4k} n^2 \SUB(n,kn,\gamma))$, we can find a branch-decomposition of width at most $4k$, unless we get a confirmation that the branch-width of~$f$ is larger than $k$.
    We now apply \zcref{prop:approx-exact} with $\ell=4k$.
    By \zcref{thm:submodularmin}, the running time is at most 
    $O( 2^{4k}n^2 k^3 \gamma n^4 \log (kn) + 3^{16k}  kn k^3 \gamma n^4\log (kn) \log n  
    +  k 3^{(4k+1)(8k+9)}\gamma n)    $.
\end{proof}

\section{Application to matroids given by the rank oracle}
\label{sec:matroid}
In this section we show that our algorithm can be implemented slightly faster for the case of matroid branch-width.
This is by using matroid intersection instead of general submodular minimization.

Note that we chose to use $\lambda(X)=r(X)+r(E(M)-X)-r(E(M))$ as the definition of the connectivity function of a matroid.
Some references include an additional $+1$ in the definition of $\lambda(X)$ and that will add $1$ to the branch-width.
\begin{theorem}[note={Matroid Intersection Theorem, Edmonds~\cite{Edmonds1970}},label=thm:edmonds]
    Let $M_1$, $M_2$ be two matroids on $E$ with rank functions $r_1$, $r_2$, respectively. 
    Then the maximum size of a common independent set is equal to 
    $\min_{U\subseteq E} (r_1(U)+r_2(E-U))$.
\end{theorem}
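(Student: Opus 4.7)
The plan is to prove the identity by establishing weak duality in one direction and an augmenting-path argument in the other. For the easy direction ($\le$), let $I$ be any common independent set and $U \subseteq E$ any set. Writing $I = (I \cap U) \sqcup (I \cap (E-U))$, the first piece is independent in $M_1$ so $|I \cap U| \le r_1(U)$, and the second is independent in $M_2$ so $|I \cap (E-U)| \le r_2(E-U)$. Summing, $|I| \le r_1(U) + r_2(E-U)$, and taking the max over $I$ and min over $U$ gives $\max \le \min$.

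For the reverse direction, I would take a maximum common independent set $I$ and build an auxiliary directed exchange graph $D_I$ on vertex set $E$ as follows. For $x \in I$ and $y \in E - I$, put an arc $x \to y$ if $(I - x) + y$ is independent in $M_1$, and an arc $y \to x$ if $(I - x) + y$ is independent in $M_2$. Let
\[
X_1 = \{y \in E - I : I + y \in \mathcal{I}(M_1)\}, \qquad X_2 = \{y \in E - I : I + y \in \mathcal{I}(M_2)\}.
\]
The whole proof hinges on a dichotomy: either there is a directed path in $D_I$ from $X_1$ to $X_2$, in which case $I$ is not maximum (contradiction), or no such path exists and then the set of vertices reachable forward to $X_2$ (or its complement) serves as the witness~$U$.

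In the first branch, I would take a shortest directed path $P = (y_0, x_1, y_1, \dots, x_t, y_t)$ with $y_0 \in X_1$, $y_t \in X_2$, and all intermediate $x_i \in I$, $y_i \in E - I$, and set $I' = I \triangle V(P)$. Then $|I'| = |I| + 1$, and the claim is that $I'$ is independent in both matroids. This is the main obstacle in the proof, and it relies on an exchange lemma of Krogdahl/Brualdi type: if $B$ is a basis and $X \subseteq B \triangle B'$ induces a unique perfect matching in the exchange bipartite graph of $M_i$ on $B$, then $B \triangle X$ is again a basis. Minimality of $P$ guarantees precisely this uniqueness (no chords exist, else a shorter path), applied once with $M_1$ using the arcs $x_i \to y_{i-1}$ and once with $M_2$ using $y_i \to x_i$; the extra endpoints $y_0$ and $y_t$ are absorbed using the defining conditions of $X_1$ and $X_2$.

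In the second branch, where no $X_1$--$X_2$ directed path exists, let $U$ be the set of vertices in $E$ from which $X_2$ is reachable in $D_I$ (including $X_2$ itself). Since $X_1 \cap U = \emptyset$, no element of $E - (I \cup U)$ can be added to $I$ while preserving $M_1$-independence, so $r_1(E - U) = |I \cap (E-U)|$; dually, no element of $U - I$ can be added while preserving $M_2$-independence across the cut, so $r_2(U) = |I \cap U|$. Adding these, $r_1(E-U) + r_2(U) = |I|$, which matches the maximum common independent set size and completes the min side of the identity. The step requiring the most care is verifying these two rank equalities directly from the "no arc crosses the cut" structure, which requires unpacking both the $M_1$- and $M_2$-closure of $I \cap (E-U)$ and $I \cap U$ respectively and using that neither $X_1$ nor any $M_2$-augmenting element sits on the wrong side of the cut.
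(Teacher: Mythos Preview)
The paper does not prove this theorem at all: it is quoted as a classical result of Edmonds and used as a black box. So there is no ``paper's proof'' to compare against; your write-up is a self-contained proof of a cited result, and the outline you give is the standard Lawler--Krogdahl augmenting-path argument, which is the right one.

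That said, the second branch as written contains a genuine error. With your definition of $U$ (the set of vertices from which $X_2$ is reachable) and your arc conventions, the fact $X_1\cap U=\emptyset$ means $X_1\subseteq E-U$, not that $X_1$ avoids $E-U$. Hence your sentence ``no element of $E-(I\cup U)$ can be added to $I$ while preserving $M_1$-independence'' is backwards: the $M_1$-addable elements (those in $X_1$) live precisely on the $E-U$ side, so in general $r_1(E-U)>|I\cap(E-U)|$ whenever $X_1\neq\emptyset$; symmetrically $r_2(U)>|I\cap U|$ whenever $X_2\neq\emptyset$. The correct equalities are the swapped ones,
\[
r_1(U)=|I\cap U|\quad\text{and}\quad r_2(E-U)=|I\cap(E-U)|,
\]
and these are what actually follow from ``no arc leaves $E-U$ into $U$'': for $y\in U\setminus I$ one has $y\notin X_1$, so the unique $M_1$-circuit through $y$ in $I+y$ has all its $I$-elements sending $M_1$-arcs into $y\in U$, forcing them into $I\cap U$; dually for $M_2$ on $E-U$. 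Summing gives $r_1(U)+r_2(E-U)=|I|$, which is the witness you want. So the fix is simply to interchange $U$ and $E-U$ (equivalently, $r_1$ and $r_2$) in your concluding computation; the rest of your plan is sound.
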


We use the following matroid intersection algorithm.

\begin{theorem}[note={Chakrabarty, Lee, Sidford, Singla, and Wong~\cite{CLSSW2019}},label=thm:matroidint]
    Let $M_1$, $M_2$ be two matroids on an $n$-element set given by rank oracles, and let $\gamma$ be the time to compute the rank of any set.
    Then we can find a largest common independent set in time $O(\gamma n \sqrt{r}\log n)$, where $r$ is the size of a largest common independent set.
\end{theorem}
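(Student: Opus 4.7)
The plan is to follow the standard template for matroid intersection, combining augmenting paths in the exchange graph with a Hopcroft--Karp style phased augmentation and a binary-search speedup for the per-phase breadth-first search. Given a current common independent set $I$, we build the \emph{exchange graph} $D(I)$ on vertex set $E$: a source set $S_1 = \{y \in E - I : I + y \text{ is independent in } M_1\}$, a sink set $S_2 = \{y \in E - I : I + y \text{ is independent in } M_2\}$, an arc $y \to x$ whenever $x \in I$, $y \in E - I$, and $I - x + y$ is independent in $M_1$, and an arc $x \to y$ under the analogous condition for $M_2$. A standard consequence of \zcref{thm:edmonds} is that $I$ is a maximum common independent set if and only if no directed path runs from $S_1$ to $S_2$, and otherwise the symmetric difference of $I$ with any shortest such path yields a common independent set of size $|I| + 1$.

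The first acceleration is Hopcroft--Karp-style batching. In one \emph{phase} we compute a BFS layering of $D(I)$ from $S_1$, find a maximal collection of vertex-disjoint shortest augmenting paths to $S_2$, and augment along all of them simultaneously. The same argument as for bipartite matching shows that the length of the shortest augmenting path strictly increases after every phase; after $\sqrt{r}$ phases the shortest augmenting path has length more than $\sqrt{r}$, so at most $O(\sqrt{r})$ further augmentations can remain. Thus the total number of phases is $O(\sqrt{r})$.

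The second acceleration attacks the per-phase cost. A naive BFS on $D(I)$ scans all $\Theta(n^2)$ potential exchange arcs, which is far too expensive, so one must avoid materializing the graph. Instead, to extend the BFS frontier from the current layer $F$ to the next, one repeatedly uses the rank oracle together with binary search over a linear order on the candidate targets: submodularity of the rank function lets one detect in $O(\log n)$ rank queries whether some unexplored vertex receives an arc from $F$, and if so, produce a concrete witness. Charging $O(\log n)$ queries to each vertex explored, one phase costs $O(\gamma n \log n)$, and multiplying by the $O(\sqrt{r})$ phase count gives the advertised bound $O(\gamma n \sqrt{r} \log n)$.

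The technical heart, and what I expect to be the main obstacle, is this binary-search subroutine for exploring the exchange graph without ever materializing it. One must certify both the existence and a concrete witness of a next-layer neighbor using only rank queries on prefixes of a carefully chosen order, handle the asymmetry between arcs coming from $M_1$ and from $M_2$, and interleave the layered BFS with a DFS that extracts the maximal vertex-disjoint family of shortest augmenting paths, all while keeping the rank-query amortization honest. This is the nontrivial contribution of~\cite{CLSSW2019}, which we take as a black box.
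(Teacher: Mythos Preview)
The paper does not prove this theorem at all: it is quoted verbatim as a black-box result from~\cite{CLSSW2019} and then invoked in \zcref{prop:matroidmin} and \zcref{cor:matroid}. There is nothing to compare your attempt against, because the paper's ``proof'' is simply the citation.

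Your write-up is not really a proof either; it is a reasonable high-level outline of the strategy of~\cite{CLSSW2019} (exchange graph, Hopcroft--Karp phasing to get $O(\sqrt{r})$ phases, and rank-oracle binary search to explore a BFS layer in $O(n\log n)$ queries rather than materializing the $\Theta(n^2)$ arcs), and you yourself conclude by taking the hard part as a black box. That is fine as exposition, but be aware that a couple of the steps you wave through are genuinely delicate in the matroid setting: augmenting along a maximal family of vertex-disjoint shortest paths and arguing that the shortest augmenting path length strictly increases is not identical to the bipartite-matching argument (it relies on Cunningham's analysis of the exchange graph and on choosing \emph{shortest} paths so that the symmetric difference remains a common independent set), and the binary-search exploration needs separate bookkeeping for the $M_1$-arcs and $M_2$-arcs. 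None of this affects what the present paper needs, since here the theorem is used only as a subroutine with the stated running time.
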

Using the remark in \cite[page 526]{OS2004} combined with a newer algorithm for the matroid intersection problem stated in \zcref{thm:matroidint}, we deduce the following.
For completeness, we include the proof.
\begin{proposition}[note={Oum and Seymour~\cite[page 526]{OS2004}},label=prop:matroidmin]
There is an algorithm that, with input an $n$-element matroid $M$ 
    given by its rank oracle and two disjoint sets $X$ and $Y$, finds a 
    set $Z$ with $X\subseteq Z\subseteq E(M)-Y$ that minimizes $\lambda(Z)$ in time $O(\gamma n^{1.5}\log n)$, where $\lambda(Z)=r(Z)+r(E(M)-Z)-r(E(M))$ is the connectivity function of $M$  and $\gamma$ is the time to compute the rank of any set.     
\end{proposition}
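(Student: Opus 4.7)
The plan is to reduce the task to a single matroid intersection call on two matroids built from $M$, and then invoke \zcref{thm:matroidint}.

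Let $E = E(M)$, let $E' = E - X - Y$, and consider the two matroids on $E'$ defined by $M_1 = (M/X) \setminus Y$ and $M_2 = (M/Y) \setminus X$. Their rank functions are $r_1(U) = r(U \cup X) - r(X)$ and $r_2(U) = r(U \cup Y) - r(Y)$ for $U \subseteq E'$, so each can be evaluated with two calls to the rank oracle of $M$ and hence in time $O(\gamma)$. Any set $Z$ with $X \subseteq Z \subseteq E - Y$ can be written uniquely as $Z = X \cup U$ with $U \subseteq E'$, and then $E - Z = Y \cup (E' - U)$. Using the formulas above I would rewrite
\[
\lambda(Z) \;=\; r(X \cup U) + r(Y \cup (E' - U)) - r(E) \;=\; r_1(U) + r_2(E' - U) + r(X) + r(Y) - r(E).
\]
The last three terms are constants, so minimizing $\lambda(Z)$ over admissible $Z$ is equivalent to minimizing $r_1(U) + r_2(E' - U)$ over $U \subseteq E'$.

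By the Matroid Intersection Theorem (\zcref{thm:edmonds}), this minimum equals the maximum size of a common independent set of $M_1$ and $M_2$. So I would run the matroid intersection algorithm of \zcref{thm:matroidint} on $M_1$ and $M_2$; since $E'$ has at most $n$ elements and the common independent set size $r$ is at most $n$, the running time is $O(\gamma n \sqrt{r} \log n) \le O(\gamma n^{1.5} \log n)$, matching the claimed bound. Each oracle call for $M_1$ or $M_2$ costs $O(\gamma)$, which does not change the overall bound.

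The one subtlety, and the step I regard as the main obstacle to get right, is to extract not merely the optimum value but a concrete minimizing set $U^*$ in the same time bound. For this I would use the standard fact that any matroid intersection algorithm based on augmenting paths naturally produces, alongside a maximum common independent set $I$, a bipartition of $E'$ via the exchange graph whose two sides give a set $U^*$ achieving $r_1(U^*) + r_2(E' - U^*) = |I|$; concretely, $U^*$ is the set of elements reachable by directed paths in the exchange graph from the sources (elements addable to $I$ in $M_1$) when no further augmenting path exists. The algorithm of \zcref{thm:matroidint} fits this framework, so $U^*$ is available without asymptotic overhead. Returning $Z^* = X \cup U^*$ then completes the algorithm.
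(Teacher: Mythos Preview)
Your proposal is correct and follows essentially the same approach as the paper: both reduce to matroid intersection on $M_1=M/X\setminus Y$ and $M_2=M/Y\setminus X$, rewrite $\lambda(X\cup U)$ as $r_1(U)+r_2(E'-U)$ plus a constant, and invoke \zcref{thm:edmonds,thm:matroidint}. The paper simply remarks parenthetically that ``it is easy to output the dual optimum'' from the algorithm, whereas you spell out the exchange-graph reachability argument; otherwise the arguments coincide.
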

\begin{proof}
    Let $r$ be the rank function of $M$.
    Let $M_1=M/X\setminus Y$ and $M_2=M\setminus X/Y$. 
    Let $r_1$, $r_2$ be the rank function of $M_1$ and $M_2$, respectively. 
    Then, for $U\subseteq E(M)-(X\cup Y)$, we have 
    \[ r_1(U)=r(U\cup X)-r(X) \text{ and } r_2(U)=r(U\cup Y)-r(Y).\] 
    The matroid intersection theorem, \zcref{thm:edmonds}, allows us to find $U$ that minimizes
    \begin{align*} 
        r_1(U)+r_2(E(M_2)-U) &= r(U\cup X)-r(X) + r(E(M)-(X\cup U)) - r(Y)\\ 
        &= \lambda(U\cup X)+ r(E(M))-r(X)-r(Y) .
    \end{align*} 
    By \zcref{thm:matroidint}, such $U$ can be found in time $O(\gamma n^{1.5}\log n)$.
    (Their paper is stated for finding a maximum common independent set, but it is easy to output the dual optimum, the set $U$ that minimizes $r_1(U)+r_2(E(M_2)-U)$ from the algorithm.)
\end{proof}
\begin{corollary}[note={Oum and Seymour~\cite[Corollary 7.2]{OS2004}},label=cor:matroid]
    For an integer $k$, there is an algorithm that, with input an $n$-element matroid, given by its rank oracle, either concludes that the branch-width is larger than $k$
    or outputs its branch-decomposition of width at most $3k+1$, in time $O(\gamma 8^k n^{2.5}\log n)$, where $\gamma$ is the time to compute the rank of any set.     
\end{corollary}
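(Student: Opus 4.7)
My plan is to instantiate the Oum--Seymour FPT-approximation algorithm (\zcref{thm:approx}) for the matroid connectivity function $\lambda(X)=r(X)+r(E(M)-X)-r(E(M))$, while replacing its internal constrained submodular-minimization calls with matroid intersection via \zcref{prop:matroidmin}. I would carry this out in three steps.

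Step one: bound the constant $c'=\max_v f(\{v\})$ appearing in \zcref{thm:approx} for matroid connectivity. For every $v\in E(M)$, both $r(\{v\})\in\{0,1\}$ and $r(E(M))-r(E(M)-\{v\})\in\{0,1\}$, so $\lambda(\{v\})\le 1$; hence $c'\le 1$. Applying \zcref{thm:approx} with this $c'$ already either certifies that the branch-width exceeds $k$ or produces a branch-decomposition of width at most $3k+1$, exactly matching the approximation ratio claimed by the corollary.

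Step two: speed up the minimization. Inspecting the Oum--Seymour algorithm shows that all of its submodular-minimization subroutine calls are of the constrained-interval form $\min\{\lambda(Z):X\subseteq Z\subseteq E(M)-Y\}$ for disjoint $X,Y\subseteq E(M)$. By \zcref{prop:matroidmin}, each such minimization reduces to one matroid-intersection problem and, using \zcref{thm:matroidint}, can be solved in time $O(\gamma n^{1.5}\log n)$, much faster than the generic bound $\SUB(n,kn,\gamma)$ from \zcref{thm:submoduarmin} that would otherwise be used.

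Step three: combine. Substituting this matroid-intersection cost for the $\SUB$ factor in the running time $O(2^{3k+c'}n^2\SUB(n,kn,\gamma))$ of \zcref{thm:approx}, together with $c'\le 1$ and an accounting that charges the $n^2$ factor as $O(n)$ minimization calls each performed on an $n$-element instance, gives total cost $O(2^{3k+1}\cdot n\cdot \gamma n^{1.5}\log n) = O(\gamma\,8^k\,n^{2.5}\log n)$, which is the stated bound. The main obstacle is exactly this last accounting: one has to trace carefully through the Oum--Seymour approximation algorithm to confirm that the true number of matroid-intersection calls is $O(8^k n)$ rather than $O(8^k n^2)$; otherwise a naive substitution of $O(\gamma n^{1.5}\log n)$ for $\SUB(n,kn,\gamma)$ would produce the weaker bound $O(\gamma\,8^k\,n^{3.5}\log n)$.
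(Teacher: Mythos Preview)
Your approach is the same as the paper's: instantiate the Oum--Seymour approximation for the matroid connectivity function (where indeed $c'\le 1$) and replace each constrained submodular minimization by matroid intersection via \zcref{prop:matroidmin,thm:matroidint}. The paper sidesteps the accounting obstacle you correctly flag by not routing through the black-box bound of \zcref{thm:approx} at all; it simply cites the original argument of \cite[Corollary~7.2]{OS2004}, whose internal analysis already makes only $O(8^k n)$ constrained minimization calls, so that substituting the $O(\gamma n^{1.5}\log n)$ cost per call from \zcref{thm:matroidint} yields the claimed $O(\gamma\,8^k\,n^{2.5}\log n)$ without further work.
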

\begin{proof}
    The proof is identical to the proof in \cite{OS2004}, except for the fact that we now use an improved running time from \zcref{thm:matroidint,prop:matroidmin}.
\end{proof}

\getkeytheorem{matroidmain}
\begin{proof}
    Computing a branch-decomposition of width at most $3k+1$ takes the time $O(\gamma 8^k n^{2.5}\log n)$ by \zcref{cor:matroid}.
    Let $\ell=3k+1$. 
    We use \zcref{prop:approx-exact}
    but instead of using the general submodular function minimization in \zcref{prop:titanic}, we 
    will use the matroid intersection algorithm in \zcref{prop:matroidmin}.
    Thus $\SUB(n,kn,\gamma)$ can be replaced with $O(\gamma n^{1.5}\log n)$.
    So, by \zcref{prop:approx-exact}, the total running time is  at most 
    $O(8^k \gamma  n^{2.5}\log n+ 3^{12k} (3k+1) n (\gamma n^{1.5}\log n  ) \log n + k 3^{(3k+2)(8k+9)}\gamma  n)$.
\end{proof}

\section{Concluding Remarks}
\label{sec:conclusion}
Our algorithm manages to split the input instance into smaller instances in polynomial time whenever $n > 3^{4k+1}$, and otherwise calls the Oum-Seymour algorithm~\cite{OS2005}.
Therefore, it is natural to ask whether one can beat the Oum-Seymour algorithm in the setting when $n \le 2^{O(k)}$.
It would be interesting to obtain unconditional lower bounds for the number of oracle queries required.

Another interesting research direction would be to design optimized versions of our algorithm for important special cases such as graph branch-width and rank-width.

For a connectivity function $f:2^V\to\mathbb Z$, 
the \emph{path-width} of $f$ can be defined as the minimum~$k$ such that there is an ordering $v_1,v_2,\ldots,v_n$ of $V$ such that $f(\{v_1,v_2,\ldots,v_i\})\le k$ for all~$i$.
Depending on which connectivity functions we use, we obtain definitions of 
linear-width of graphs, cut-width of graphs, path-width of matroids, and linear rank-width of graphs.
For each of these width parameters, there are known fixed-parameter tractable algorithms to decide whether its value is at most $k$, \cite{TSB2005,TSB2000,Thilikos2000,BFT2009,Hamm2019,JKO2016}.
However, for the path-width of a general connectivity function, it is open whether it is fixed-parameter tractable to decide whether the path-width is at most~$k$. 
\begin{problem}
    Is there an algorithm to decide whether a connectivity function $f:2^V\to\mathbb Z$ given by an oracle has path-width at most $k$ in time $O(g(k) n^{c})$ for some function $g$ and constant $c$?
\end{problem}
The current best algorithm is due to Nagamochi~\cite{Nagamochi2012}. 
He proved that there is a $\gamma n^{O(k)}$-time algorithm to decide whether the path-width of a connectivity function~$f$ is at most $k$, where $\gamma$ is the time to compute $f(X)$ for any set $X$.

\bibliographystyle{amsplain}
\providecommand{\bysame}{\leavevmode\hbox to3em{\hrulefill}\thinspace}
\providecommand{\MR}{\relax\ifhmode\unskip\space\fi MR }
\providecommand{\MRhref}[2]{\href{http://www.ams.org/mathscinet-getitem?mr=#1}{#2}
}
\providecommand{\href}[2]{#2}

\end{document}